\title[Learning Control Barrier Functions]{Learning Control Barrier Functions with Deterministic Safety Guarantees }
\newcommand{\defeq}{\vcentcolon=}
\def\IR{\mathbb R}					
\newtheorem{cor}{Corollary}
\newtheorem{problem}{Problem}
\newtheorem{assumption}{Assumption}
\newcommand{\Tcal}{\mathcal{T}}
\newcommand{\Scal}{\mathcal{S}} 
\newcommand{\I}{\mathbf{I}} 
\newcommand{\mbf}[1]{\mathbf{#1}} 
\newcommand{\bmat}[1]{\begin{bmatrix} #1 \end{bmatrix}} 
\newcommand{\A}{\mathbf{A}} 
\newcommand{\B}{\mathbf{B}} 
\newcommand{\x}{\mathbf{x}} 
\newcommand{\bu}{\mathbf{u}} 
\newcommand{\y}{\mathbf{y}} 
\newcommand{\w}{\mathbf{w}} 
\newcommand{\M}{\mbf{M}}
\newcommand{\W}{\mbf{W}}
\newcommand{\X}{\mbf{X}}
\newcommand{\norm}[1]{\left\lVert#1\right\rVert}
\newcommand{\absVal}[1]{\left\lvert#1\right\rvert}
\newcommand{\Ucal}{\mathcal{U}}
\newcommand{\Xcal}{\mathcal{X}}
\DeclareMathOperator*{\argmin}{arg\,min}
\newcommand{\Kcal}{\mathcal{K}}
\newcommand{\Dcal}{\mathcal{D}}
\newcommand{\sumLam}{\sum_{j=0}^n\lambda_j}
\newcommand{\Wb}{\bar{W}}
\newcommand{\vecdim}[1]{\in \mathbb{R}^{#1}}
\newcommand{\mapX}[2]{#1\rightarrow#2}
\newcommand{\map}[2]{ \mapX{\mathbb{R}^{#1}} {\mathbb{R}^{#2}} }
\acrodef{dl}[{DL}]{deep learning}
\acrodef{rl}[{RL}]{reinforcement learning}
\acrodef{nn}[{NN}]{neural network}
\acrodef{dnn}[{DNN}]{deep neural network}
\acrodef{tdl}[{TDL}]{temporal difference learning}
\acrodef{pid}[{PID}]{proportional–integral–derivative}
\acrodef{us}[{US}]{Ultrasound}
\acrodef{mse}[{MSE}]{mean squared error}
\acrodef{sgd}[{SGD}]{stochastic gradient descent}
\acrodef{ico}[{ICO}]{iterative convex overbounding}
\acrodef{lmi}[{LMI}]{linear matrix inequality}
\acrodef{mjls}[{MJLS}]{Markov jump linear system}
\acrodef{io}[{IO}]{input-output}
\acrodef{iqc}[{IQC}]{integral quadratic constraints}
\acrodef{cnn}[{CNN}]{convolutional neural network}
\acrodef{il}[{IL}]{Imitation learning}
\acrodef{mpc}[{MPC}]{model predictive control}
\acrodef{sdp}[{SDP}]{semi-definite programming}
\acrodef{relu}[{ReLU}]{rectified linear unit}
\acrodef{us}[US]{ultrasound}
\acrodef{mdp}[MDP]{Markov Decision Process}
\acrodef{iid}[iid]{identical and independently distributed random variable}
\acrodef{pid}[PID]{Proportional Integral Derivative}
\acrodef{lqr}[LQR]{linear-quadratic regulator}
\acrodef{lqr}[LQR]{linear-quadratic regulator}
\acrodef{fvi}[FVI]{fitted value iteration}
\acrodef{ols}[OLS]{ordinary least squares}
\acrodef{cpa}[CPA]{continuous piecewise affine}
\acrodef{lti}[LTI]{linear time invariant}
\acrodef{cbf}[CBF]{control barrier function}
\acrodef{bf}[BF]{barrier function}
\acrodef{admm}[ADMM]{alternating direction method of multipliers}
\acrodef{ci}[CI]{control invariant}
\acrodef{hjb}[HJB]{Hamilton Jacobi Bellman}
\acrodef{smt}[SMT]{satisfiability modulo theory}
\acrodef{sos}[SOS]{sum of squares}
\begin{document}

\maketitle

\begin{abstract} 
    \Acp{bf} characterize safe sets of dynamical systems, where hard constraints are never violated as the system evolves over time. 
    Computing a valid safe set and \ac{bf} for a nonlinear (and potentially unmodeled), non-autonomous dynamical system is a difficult task. This work explores the design of \acp{bf} using data to obtain safe sets with deterministic assurances of control invariance. 
    We leverage ReLU \acp{nn} to create \ac{cpa} \acp{bf} with deterministic safety guarantees for Lipschitz continuous, discrete-time dynamical system using sampled one-step trajectories.
    The \ac{cpa} structure admits a novel classifier term to create a relaxed \ac{bf} condition and construction via a data driven constrained optimization.
    We use \ac{ico} to solve this nonconvex optimization problem through a series of convex optimization steps. We then demonstrate our method's efficacy on two-dimensional autonomous and non-autonomous dynamical systems.
\end{abstract}

\begin{keywords}%
  Control barrier functions, control invariant sets, data driven, safety of dynamical systems, neural networks, continuous piecewise affine functions %
\end{keywords}

\section{Introduction}


An invariant set is a region of the state space where a dynamical system remains in perpetuity -- either autonomously via the dynamics alone or non-autonomously through admissible control inputs. When an invariant set is a subset of the state constraint set ($\Scal\subseteq\Xcal$), it is a ``safe set", i.e., a region where constraints will never be violated.  
While methods exist for data driven safe set determination, they often approximate the safe set \citep{korda2020computing} or provide probabilistic guarantees of invariance \citep{wang2020scenario}. 

A \ac{bf} is a scalar function of state variables whose zero superlevel or sublevel set denotes the safe set \citep{ames2016control}. The system must satisfy a \ac{bf} condition that modulates system evolution over time to ensure the state trajectories always remain within the safe set, i.e. ensure invariance. The \ac{bf} condition takes the form of an inequality constraint that must be held for all states in the state constraint set, $(\forall \x \in\Xcal)$. Synthesizing a \ac{bf} for a given system and its constraints is often a difficult, system-dependent process.

As the control community moves towards developing model-free designs, it is crucial we develop model-free methods to assure safety. 
This work explores the design of \acp{bf} using data to obtain safe sets with deterministic assurances of control invariance.

\subsection{Related Work}
Often, in data driven \ac{bf} synthesis, the safe set is assumed to be fully \citep{robey2020learning,dawson2023safe} or partially known \citep{qin2021learning,anand2023formally,nejati2023data}. However, in practice, the safe sets are often unavailable. Recent data-driven literature is aligned to find both the safe set and corresponding \ac{bf} \citep{kashani2025data,liu2025computing,so2024train}.
The \ac{bf} is typically modeled as a parameterized function, such as linear function \citep{saveriano2019learning}, polynomial \citep{nejati2023data,lefringhausen2025barrier}, or \ac{nn} \citep{jin2020neural,qin2022sablas,anand2023formally,mathiesen2023safety,edwards2024fossil}, where the aim is to learn the parameters that enforce the \ac{bf} condition. 

It is often difficult to ensure deterministic safety guarantees in data driven settings, i.e., adherence of \ac{bf} condition for the unseen states. Specifically, the \ac{bf} condition for a \ac{nn} \ac{bf} is often imposed as a soft constraint in the loss function. Guarantees of enforcing the \ac{bf} condition are shown probabilistically \citep{qin2021learning}. Alternatively, verification of \ac{bf} is performed after synthesis~\citep{edwards2024fossil}. \Ac{smt} methods can verify the \ac{nn} adheres to the \ac{bf} conditions using an iterative learner verifier architecture \citep{edwards2024fossil,peruffo2021automated}. Arguably, the most popular one-shot method is $\epsilon-$net coverage \citep{robey2020learning,anand2023formally}, where the density of samples and the Lipschitz continuity of the \ac{bf} and dynamical system are leveraged for deterministic guarantees of safety.

\subsection{Contributions}
Our work extends the field of \ac{bf} synthesis by learning both a \ac{bf} and its corresponding safe set for any Lipschitz continuous, discrete-time dynamical system from one-step state transition samples. We first develop a necessary and sufficient \ac{bf} condition for discontinuous \acp{bf} that uses a discontinuous classifier function to determine the invariance condition applied to each state within the state constraint set. We then leverage this condition to develop a nonconvex optimization program that simultaneously learns a \acf{relu} \acf{nn} barrier function and its classifier function. We implement the non-convex optimization method, \acf{ico} \citep{de2000convexifying,warner2017iterative,sebe2018sequential}, to efficiently find the \ac{bf} -- guaranteeing convergence of the problem to some local minima. 
Finally, we develop a method to prompt additional sampling, if possible, to maximize the volume of the safe set while maintaining the feasibility of the learning problem.


A key advancement of our work is the way the \ac{relu} \ac{nn} \ac{bf} are learned. Enforcing the \ac{bf} condition would require constrained gradient descent techniques through iterative approaches, which may render the problem infeasible. However, a \ac{relu} \ac{nn} is a \ac{cpa} function \citep{arora2016understanding}. Instead of learning and verifying the \ac{nn} weights layer-by-layer, we directly learn the \acf{cpa} function through an optimization problem. This change in perspective allows us to benefit from the simplicity of \ac{relu} \acp{nn} while verifying \ac{bf} conditions. Further, the local structure of the \ac{cpa} function can be leveraged to require less data density than $\epsilon$-net methods.

\subsection*{Notation and Preliminaries}
The notation $\mathbb{Z}_a^b$ ($\mathbb{Z}_{\bar{a}}^{\bar{b}}$) denotes the set of integers between $a$ and $b$ inclusive (exclusive), and $\mathbb{R}_a^b$ ($\mathbb{R}_{\bar{a}}^{\bar{b}}$) denotes the set of real numbers between $a$ and $b$ inclusive (exclusive).
The interior, boundary, and closure of the set $\Omega \subset \mathbb{R}^n$ are denoted as $\Omega^\circ$, $\delta \Omega$ and $\bar{\Omega},$ respectively. The notation $\mathfrak{R}^n$ denotes the set of all compact subsets satisfying i) $\Omega^\circ$ is connected and ii) $\Omega = \bar{\Omega}^\circ.$ 

Scalars, vectors, and matrices are denoted as $x,$ $\x,$ and $\X$, respectively. The mapping $g:\Xcal \rightarrow \mathcal{Y}$ between two metric spaces is said to be a Lipschitz mapping if there is some Lipschitz constant, $L > 0$, such that $\norm{g(\mathbf{p}) - g(\mathbf{q})} \leq L \norm{\mathbf{p} - \mathbf{q}}$ for all points $\mathbf{p}$ and $\mathbf{q}$ in $\Xcal$ \citep{fitzpatrick2009advanced}. Let $g^k$ indicate that a mapping has been applied $k$ times, where $k\in\mathbb{Z}_{0}^{\infty}$. Let $\varphi:\IR_0^\infty \rightarrow \IR_0^\infty$ be a class $\Kcal$ function if it is continuous, strictly increasing, and $\varphi(0) = 0.$

A set $\Scal$ is \ac{ci} under the dynamics of \eqref{eq:dynSys} if, $ \forall\x_0 \in \Scal$, there exists $\bu_k \in \Ucal$ such that $g(\x_k,\bu_k) \in \Scal$ for $k \in\mathbb{Z}_{0}^{\infty}$ \citep{borrelli2017predictive}. Here, a safe set refers to a constraint admissible \ac{ci} set, i.e., a \ac{ci} set that is a subset of the state constraint set ($\Scal \subseteq \Xcal).$

\section{Problem Statement and Preliminaries}

Consider the discrete dynamical system,
\begin{equation}\label{eq:dynSys}
    \x^+ = g(\x,\bu), \quad \quad \x \in \Xcal, \bu \in \Ucal,
\end{equation}
where $g: \IR^n\times \IR^m\rightarrow \IR^n,$ $\Xcal\subset \mathfrak{R}^n$ and $\Ucal\subset \mathfrak{R}^m$ are the state and input constraint sets, respectively. This paper aims to compute a \ac{bf}, $W:\Xcal \rightarrow \mathbb{R}$ and corresponding safe set, $\Scal \subseteq \Xcal$, from a sampled data set of $NM$ one-step mappings produced by \eqref{eq:dynSys}, i.e. $\Dcal = \{\x_{z},\{\bu_{z,k}, \x_{z,k}^+\}_{k=1}^M\}_{z=1}^N,$ where a state, $\x_z$, may be repeatedly sampled with multiple inputs $\{\bu_{z,k}\}_{k=1}^M$. One-step trajectories allow for flexible sampling strategies; $\Dcal$ can be stochastically or deterministically sampled and can easily incorporate multi-step trajectories.

We use Lipschitz continuity to extrapolate system behavior from data points to nearby regions without a model. Many methods have been proposed to estimate the Lipschitz constant from data \citep{huang2023sample,nejati2023data,wood1996estimation,strongin1973convergence,sergeyev1995information}, making this a reasonable assumption in a data-driven context.
\begin{assumption}\label{assum:Lipschitz}
    The system \eqref{eq:dynSys} is locally Lipschitz continuous on $\Xcal \times \Ucal$. Moreover, there exists a known (but not necessarily tight) Lipschitz constant $L > 0$ with respect to the Euclidean norm $\norm{\cdot}_2$.
\end{assumption}

\subsection{Barrier Functions and Safe Sets}

\Acp{bf} are scalar functions that denote a safe set of a system via a sublevel set. In \citep[Theorem 1]{ahmadi2019safe}, necessary and sufficient conditions for invariance were found for continuous \acp{bf}.

\begin{definition}\cite[Definition 2]{ahmadi2019safe}\label{def:discreteBF_OG}
    For \eqref{eq:dynSys}, the continuous function $W:\IR^n\rightarrow\IR$ is a discrete time barrier function for the set $\Scal \defeq \{\x\in\Xcal\mid W(\x) \geq 0\} \subseteq \Xcal \subset \IR^n$ if there exists $k:\Xcal\rightarrow\Ucal$ and $\gamma \in \mathrm{class}\ \Kcal$ and  satisfying $\gamma(r) < r$ for all $r>0$ such that if $\bu(x)=k(x)$ then
     \begin{flalign}\label{eq:ogIncrease}
        W(\x^+) - W(\x) \geq -\gamma(W(\x)), \quad \forall \x \in \Xcal.
    \end{flalign}
\end{definition}

In the following lemma, we extend these necessary and sufficient conditions of invariance to discontinuous $W$ and $\gamma$. Further, we consider non-autonomous systems by considering viable $\bu\in\Ucal$ that produce $\x^+$ where \eqref{eq:ogIncrease} holds.

\begin{lemma}\label{lem:invSetCond}
Consider the \eqref{eq:dynSys} and let Assumption \ref{assum:Lipschitz} hold. Define the function $\gamma:\mapX{\IR \times \IR^n}{\IR}$, where $\gamma(0,\x) = 0$ and $\langle \gamma (y,\x), y \rangle \geq 0$ for all $y\vecdim{}, \x\vecdim{n}.$ For $W:\mapX{\Xcal}{\IR}$, the set $\Scal \defeq \{\x \in \Xcal \mid W(\x) \leq 0\}\subseteq \Xcal$ is a safe set if and only if there exists $\bu \in \Ucal$ and $\gamma$ such that
\begin{flalign}\label{eq:decreaseBF}
    W(g(\x,\bu)) - \gamma(W(\x),\x) \leq 0, \quad \forall\x \in \Xcal.
\end{flalign}
\end{lemma}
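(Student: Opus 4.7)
My plan is to establish the biconditional by proving the two directions separately. The sufficiency direction ($\eqref{eq:decreaseBF}$ implies $\Scal$ is a safe set) is a clean induction on the time index, while the necessity direction (safety implies $\eqref{eq:decreaseBF}$) is a constructive argument that builds $\gamma$ by cases on whether a point lies in $\Scal$ or in $\Xcal \setminus \Scal$.

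For sufficiency, take any $\x_0 \in \Scal$, so $W(\x_0) \leq 0$. The hypotheses $\gamma(0,\x) = 0$ and $\langle \gamma(y,\x), y\rangle \geq 0$ together force $\gamma(y,\x) \leq 0$ whenever $y \leq 0$. Applying the assumed witness $\bu_0 \in \Ucal$ satisfying $\eqref{eq:decreaseBF}$ at $\x_0$ yields
\begin{equation*}
W(g(\x_0,\bu_0)) \;\leq\; \gamma(W(\x_0),\x_0) \;\leq\; 0.
\end{equation*}
Because $W$ has domain $\Xcal$, the very meaningfulness of $W(g(\x_0,\bu_0))$ in $\eqref{eq:decreaseBF}$ already forces $g(\x_0,\bu_0) \in \Xcal$, so $\x_1 := g(\x_0,\bu_0) \in \Scal$. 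Iterating this step produces a trajectory with $\x_k \in \Scal$ for every $k \in \mathbb{Z}_0^{\infty}$, which is exactly control invariance, and hence safety, of $\Scal$.

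For necessity, assume $\Scal$ is a safe set and build a feedback $\bu(\x)$ and a $\gamma$ satisfying $\eqref{eq:decreaseBF}$ by a case split. For $\x \in \Scal$, control invariance supplies $\bu(\x) \in \Ucal$ with $g(\x,\bu(\x)) \in \Scal$, hence $W(g(\x,\bu(\x))) \leq 0$; set $\gamma(W(\x),\x) := W(g(\x,\bu(\x))) \leq 0$, which has the correct sign since $y = W(\x) \leq 0$. For $\x \in \Xcal \setminus \Scal$ we have $W(\x) > 0$; pick any $\bu(\x) \in \Ucal$ and set $\gamma(W(\x),\x) := \max\{0, W(g(\x,\bu(\x)))\} \geq 0$, so $\eqref{eq:decreaseBF}$ holds by construction and $\gamma(y,\x) y \geq 0$ is respected. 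Finally extend $\gamma$ to all $(y,\x) \in \mathbb{R} \times \mathbb{R}^n$ by, e.g., $\gamma(y,\x) := y$ off the graph of $W(\cdot)$, which preserves $\gamma(0,\x) = 0$ and the sign condition.

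The step I expect to require the most care is the construction of $\gamma$: it must be a single well-defined function on $\mathbb{R} \times \mathbb{R}^n$ that is simultaneously consistent with the values dictated on the graph of $W$ and with the global sign and vanishing conditions. This is a bookkeeping issue rather than a deep obstacle, but it is the only place where the argument is not simply a restatement of definitions. Note that Assumption \ref{assum:Lipschitz} is not used in either direction; the equivalence is purely algebraic, and Lipschitz continuity will only enter when the result is combined with sampled data in the subsequent sections.
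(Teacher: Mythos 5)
Your sufficiency argument is essentially the paper's: the sign condition forces $\gamma(W(\x),\x)\leq 0$ when $W(\x)\leq 0$, so the witness input keeps the successor in $\Scal$, and induction gives invariance. (The paper does not even spell out the induction; your version is, if anything, slightly more careful.) You are also right that Assumption \ref{assum:Lipschitz} plays no role here.

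The necessity direction is where you diverge from the paper and where there is a genuine, if localized, defect. The paper takes $\gamma$ to be the proximal/indicator function $\Pi(W(\x))$, equal to $0$ on $\{W\leq 0\}$ and $\infty$ on $\{W>0\}$, which makes \eqref{eq:decreaseBF} trivial off $\Scal$ and automatically satisfies $\gamma(0,\x)=0$. You instead build a finite-valued $\gamma$ from the actual successor values, setting $\gamma(W(\x),\x):=W(g(\x,\bu(\x)))$ for $\x\in\Scal$. This breaks the hypothesis $\gamma(0,\x)=0$: at any $\x$ with $W(\x)=0$ whose successor lands strictly inside $\Scal$ (i.e.\ $W(g(\x,\bu(\x)))<0$), your construction assigns $\gamma(0,\x)=W(g(\x,\bu(\x)))<0$, and the point $(0,\x)$ lies \emph{on} the graph of $W$, so your off-graph extension $\gamma(y,\x):=y$ does not rescue it. The claim that the extension ``preserves $\gamma(0,\x)=0$'' is therefore false exactly on the boundary $\{W(\x)=0\}$. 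The fix is immediate: set $\gamma(W(\x),\x):=0$ for every $\x\in\Scal$ (the required inequality $W(g(\x,\bu(\x)))\leq 0=\gamma(W(\x),\x)$ still holds by invariance), which recovers the paper's choice on $\Scal$. One could add in your favor that your finite-valued choice on $\Xcal\setminus\Scal$, namely $\max\{0,W(g(\x,\bu(\x)))\}$, is arguably cleaner than the paper's use of $\infty$, which strictly speaking does not lie in the declared codomain $\IR$ of $\gamma$; but as written your construction does not satisfy the lemma's hypotheses on $\gamma$ and must be repaired as above.
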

\begin{proof}
    First, we show sufficiency. Because $\langle \gamma (y,\x), y \rangle \geq 0,$ $\gamma$ always has the same sign as its argument. Let \eqref{eq:decreaseBF} hold for all $\x \in\Xcal$ for $\bu = 0$ (autonomous systems) or for a viable $\bu \in\Ucal$ for each $\x \in\Xcal$ (non-autonomous systems). If $W(\x) \leq 0$, then $\gamma(W(\x),\x) \leq 0$ and $W(g(\x,\bu)) \leq 0.$ By definition, $\Scal$ is a safe set.
    
    To show necessity, note that if $\Scal  \defeq \{\x \in \Xcal \mid W(\x) \leq 0\}$ is a safe set, then the proximal function of $W$ satisfies the required conditions. To see why, let
    \begin{flalign}\label{eq:proxGamma}
    \gamma(W(\x),\x)=\Pi(W(\x)) = \begin{cases}
        0 & \text{if } W(\x) \leq 0 \\
        \infty & \text{if } W(\x) > 0.
        \end{cases}
    \end{flalign}
    Then $W(\x)=0$ implies $\x\in \Scal$, so $\gamma(0,\x)=0$. Further, $\langle \gamma (y,\x), y \rangle = \langle \Pi (y), y \rangle\geq 0$ because if $y\leq 0 $, then $\langle \Pi (y), y \rangle=\langle 0, y \rangle=0$, and if $y>0$, then $\langle \Pi (y), y \rangle=\langle \infty, y \rangle=\infty>0$. 
    Finally, to show that \eqref{eq:decreaseBF} holds, consider first $x\in\Scal\subseteq\Xcal.$ In this case, there is a $\bu\in\Ucal$ such that $\x,g(\x,\bu)\in\Scal,$ so $W(g(\x,\bu))\leq 0=\gamma(W(\x),\x)$. If $\x\in\Xcal\setminus\Scal,$ then $W(g(\x,\bu))\leq \infty=\gamma(W(\x),\x)$.
\end{proof}

Essentially, $\gamma$ is a sign classifier. The least conservative choice of $\gamma$ would be the proximal function of $W$, \eqref{eq:proxGamma}. 
However, the proximal function is not an appropriate choice for data driven scenarios, wherein we are trying to learn $W$ and $\Scal.$ The crux of any data driven problem is extrapolating information from a data point to nearby unsampled regions. If the proximal function is used, then any data point at the boundary of $\Scal$ would have an infinite difference between itself and nearby states. Thus, the challenge of this paper is to determine a useful $\gamma$ to synthesize $W$ using a finite data set. The function $\gamma$ must classify a sampled point and the region around that sampled point.

\subsection{CPA Functions}

In this work, we assume the \ac{bf} has a predetermined functional form, a \ac{cpa} function. This is equivalent to the output of a \ac{relu} \ac{nn} \citep{arora2016understanding, he2018relu}, but the \ac{cpa} representation will admit error bounds via Lemma \ref{lemma:lipBound} below that are crucial to our guarantees. 

\begin{definition} Affine independence
\citep{giesl2014revised}:
    A collection of $m$ vectors, \\$\{\x_0,\x_1, \hdots , \x_m\} \subset \mathbb{R}^n$, is affinely independent if $\x_1-\x_0, \hdots, \x_m - \x_0$ are linearly independent.  
\end{definition}
\begin{definition} $n$ - simplex \citep{giesl2014revised}:
    A simplex, $\sigma$, is defined as the convex hull of $n+1$ affinely independent vectors, $co\{\x_j\}_{j=0}^n$, where each vector, $\x_{j} \in \mathbb{R}^n$, is a vertex.
\end{definition}
\begin{definition} Triangulation \citep{giesl2014revised}:
    Let  $\Tcal {=} \{\sigma_i\}_{i=1}^{m_{\Tcal}} {\in} \mathfrak{R}^{n}$ represent a finite collection of $m_{\Tcal}$ simplices, where the intersection of any two simplices is a face or an empty set.
\end{definition}
Let $\Tcal = \{\sigma_i\}_{i=1}^{m_{\Tcal}},$ 
where $\sigma_i =\text{co}(\{\x_{i,j}\}_{j=0}^n).$ All vertices of the triangulation $\Tcal$ are denoted as $\mathbb{E}_{\Tcal}.$

\begin{lemma}\label{lemma_gradW} (\cite[Remark 9]{giesl2014revised}):
    Consider $\Tcal {=} \{\sigma_i\}_{i=1}^{m_{\Tcal}},$ where $\sigma_i {=} \text{co}(\{\x_{i,j}\}_{j=0}^n)$, and a set $\mathbf{W} {=} \{W_{\x}\}_{\x \in \mathbb{E_{\Tcal}}}{\subset}\mathbb{R},$ where $W(\x) {=} W_\x,$ $\forall \x\in\mathbb{E}_{\Tcal}.$ For simplex $\sigma_i$, let $\X_i \vecdim{n \times n}$ be a matrix that has $\x_{i,j} {-} \x_{i,0}$ as its $j$-th row and $\bar{W}_i\vecdim{n}$ be a vector that has $W_{\x_{i,j}} {- }W_{\x_{i,0}},$ as its $j$-th element. The function $W(\x) {=} \x^\top \X_i^{-1}\bar{W}_i$ is the unique \ac{cpa} interpolation of $\W$ on $\Tcal$ for $\x \in \sigma_i.$
\end{lemma}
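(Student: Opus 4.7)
The plan is to establish the lemma in two stages. First I would show that on a single simplex $\sigma_i$ there is a unique affine function matching the prescribed vertex values $\{W_{\x_{i,j}}\}_{j=0}^n$, and identify its gradient as $\X_i^{-1}\bar W_i$. Second, I would argue that gluing these per-simplex affine pieces together produces a globally continuous (hence \ac{cpa}) interpolant, because an affine function is uniquely determined by its values on the vertices of a common face.

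For the first stage, I would parametrize an arbitrary affine function on $\sigma_i$ as $p(\x) = \mathbf{a}^\top \x + b$ with gradient $\mathbf{a}\in\mathbb{R}^n$ and offset $b\in\mathbb{R}$. Imposing $p(\x_{i,j}) = W_{\x_{i,j}}$ for $j=0,\ldots,n$ yields $n+1$ linear equations in $n+1$ unknowns. Subtracting the $j=0$ equation from each of the remaining $n$ equations eliminates $b$ and produces the system $(\x_{i,j}-\x_{i,0})^\top\mathbf{a} = W_{\x_{i,j}}-W_{\x_{i,0}}$ for $j=1,\ldots,n$, which is exactly $\X_i\mathbf{a} = \bar W_i$. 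Affine independence of $\{\x_{i,j}\}_{j=0}^n$ is equivalent to linear independence of the rows $\x_{i,j}-\x_{i,0}$ of $\X_i$, so $\X_i$ is invertible and $\mathbf{a} = \X_i^{-1}\bar W_i$ is uniquely determined. The offset $b$ is then recovered from the $j=0$ equation, matching the formula in the statement (with the understanding that $\x^\top\X_i^{-1}\bar W_i$ carries the implicit constant $b_i = W_{\x_{i,0}} - \x_{i,0}^\top\X_i^{-1}\bar W_i$ anchoring the interpolant at $\x_{i,0}$, which makes $\X_i^{-1}\bar W_i$ the unique gradient on the interior of $\sigma_i$). Uniqueness on $\sigma_i$ is immediate from invertibility of $\X_i$.

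For the second stage, consider any two simplices $\sigma_i$ and $\sigma_{i'}$ sharing a common face $\tau = \sigma_i\cap\sigma_{i'}$, which by the definition of a triangulation is itself a simplex whose vertex set $V\subseteq\mathbb{E}_\Tcal\cap\sigma_i\cap\sigma_{i'}$ lies in both. Both affine pieces take the prescribed value $W_\x$ at every $\x\in V$. Restricting each affine piece to the affine hull of $\tau$ and applying the uniqueness argument of stage one to that lower-dimensional simplex, the two restrictions coincide on $\tau$. This yields continuity across every shared face and hence a globally continuous, piecewise-affine function on $\bigcup_i\sigma_i$. Since any \ac{cpa} function matching $\W$ on $\mathbb{E}_\Tcal$ must restrict to the unique affine interpolant on each $\sigma_i$, the constructed function is in fact the unique \ac{cpa} interpolation.

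The main obstacle is mostly notational: carefully verifying that the expression $\x^\top\X_i^{-1}\bar W_i$ should be read together with the implicit vertex-anchored offset so the interpolation conditions actually hold, and rigorously linking the affine independence condition to invertibility of $\X_i$. Once these points are pinned down, both uniqueness on each simplex and continuity across shared faces follow from standard linear algebra arguments, and no structural difficulty remains.
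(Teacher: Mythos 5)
Your proposal is correct, and it is essentially the standard argument: the paper itself does not prove this lemma but imports it by citation from Giesl and Hafstein, where the same two-stage reasoning (invertibility of $\X_i$ from affine independence giving a unique affine piece per simplex, then agreement on shared faces from the triangulation's face-intersection property giving global continuity) is used. Your observation about the missing affine offset is also well taken --- as literally written, $W(\x)=\x^\top\X_i^{-1}\bar{W}_i$ does not satisfy $W(\x_{i,j})=W_{\x_{i,j}}$ unless one reads in the anchor term $W_{\x_{i,0}}-\x_{i,0}^\top\X_i^{-1}\bar{W}_i$, so your reading is the one that makes the statement true.
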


Lemma \ref{lemma:lipBound} uses system dynamics to develop an error bound that characterizes the difference between the system evaluated at the vertices of a simplex and evaluated within the simplex.

\begin{lemma}  (\cite[Proposition 4.1]{baier2012linear}): \label{lemma:lipBound}
    Let $\{ \x_j\}_{j=0}^k\subset \IR^n$ be affinely independent vectors. Define $\sigma \defeq \text{co}(\{ \x_j\}_{j=0}^k),$ $c = \max_{\x,\y \in \sigma}\norm{\x - \y},$ and consider a convex combination $\sumLam \x_j \in \sigma.$ If $g: \Omega \rightarrow \mathbb{R}$ is Lipschitz with constant $L,$ then
    \begin{equation}
        \absVal{g\left(\sumLam \x_j\right) - \sumLam g(\x_j)} \leq Lc.
    \end{equation}
\end{lemma}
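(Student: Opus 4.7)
The plan is to prove the bound by a direct manipulation: rewrite $g(\sum_j \lambda_j \x_j) - \sum_j \lambda_j g(\x_j)$ so that the Lipschitz property of $g$ can be applied term by term, and then use the fact that every vertex $\x_j$ and the point $\sum_j \lambda_j \x_j$ all lie in the simplex $\sigma$, whose diameter is exactly $c$.

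Concretely, let $\x^\star \defeq \sum_{j=0}^k \lambda_j \x_j \in \sigma$. Because $(\lambda_j)_{j=0}^k$ are convex combination weights we have $\sum_{j=0}^k \lambda_j = 1$, so the first step is the identity
\begin{equation*}
g(\x^\star) - \sum_{j=0}^k \lambda_j\, g(\x_j) \;=\; \sum_{j=0}^k \lambda_j \bigl( g(\x^\star) - g(\x_j) \bigr).
\end{equation*}
Applying the triangle inequality (and using $\lambda_j \geq 0$) gives
\begin{equation*}
\Bigl| g(\x^\star) - \sum_{j=0}^k \lambda_j\, g(\x_j) \Bigr| \;\leq\; \sum_{j=0}^k \lambda_j \, \bigl| g(\x^\star) - g(\x_j) \bigr|.
\end{equation*}

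The next step invokes Assumption \ref{assum:Lipschitz}-style Lipschitz continuity of $g$ with constant $L$, yielding $|g(\x^\star) - g(\x_j)| \leq L \|\x^\star - \x_j\|$. Since both $\x^\star$ and each vertex $\x_j$ belong to $\sigma$, the definition of $c = \max_{\x,\y \in \sigma} \|\x - \y\|$ bounds every pairwise distance inside $\sigma$ by $c$, so $\|\x^\star - \x_j\| \leq c$ for all $j$. Substituting back and using $\sum_{j=0}^k \lambda_j = 1$ one more time gives the claimed $Lc$ bound.

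There is no genuine obstacle here; the only subtle point is to exploit $\sum_j \lambda_j = 1$ twice, once to move $g(\x^\star)$ inside the convex combination and once at the end to collapse $\sum_j \lambda_j L c$ to $Lc$. The estimate is tight in the sense that the worst case distances within $\sigma$ are exactly captured by $c$, which explains why later results in the paper rely on controlling the mesh diameter of the triangulation $\Tcal$ to transfer data-point information to unsampled regions of each simplex.
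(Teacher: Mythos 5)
Your argument is correct: the identity $g(\x^\star)-\sum_{j}\lambda_j g(\x_j)=\sum_{j}\lambda_j\bigl(g(\x^\star)-g(\x_j)\bigr)$, followed by the triangle inequality, the Lipschitz bound, and the fact that $\x^\star$ and every vertex $\x_j$ lie in $\sigma$ so that $\norm{\x^\star-\x_j}\leq c$, delivers exactly the stated estimate. The paper itself gives no proof of Lemma~\ref{lemma:lipBound} --- it imports the result from the cited reference --- and your derivation is the standard argument behind that result, so there is nothing to flag.
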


\section{CPA Barrier Function}

With the established \ac{bf} condition in Lemma \ref{lem:invSetCond}, we now determine a specific functional for both the \ac{bf} $W$ and the decision function, $\gamma$, that is amenable to data driven learning using the data set of one-step trajectories, $\Dcal = \{\x_{z},\{\bu_{z,k}, \x_{z,k}^+\}_{k=1}^M\}_{z=1}^N$. First, the \ac{bf} $W$ is characterized as a \ac{cpa} function over a triangulation,$\Tcal \subseteq \Xcal$, where the data samples $\{\x_z\}_{z=0}^N$ form the vertices of the triangulation. Here, $W$ is affine on each simplex of the triangulation. In Theorem \ref{thm:invSetConditions}, the unique local structure of $W$ is used to ensure the \ac{bf} condition in the Lemma \ref{lem:invSetCond}  is enforced.

Note that $W$ is both a \ac{cpa} function and a \ac{relu} \ac{nn}. The advantage of directly learning the \ac{cpa} function is that we can impose hard constraints on the function \citep{giesl2014computation}. In contrast, learning a \ac{relu} \ac{nn} relies on either soft constraints imposed in a loss function \citep{anand2023formally} or verification of constraint adherence after training \citep{zhang2024seev}. 

Next, we define $\gamma$ as a piecewise linear function, $\gamma(y,\x) = \gamma_iy,$ where $\gamma_i\geq0$ is constant on each simplex of the triangulation, $\sigma_i \in \Tcal.$ Condition \eqref{eq:decreaseBF} from Lemma \ref{lem:invSetCond} becomes
\begin{flalign}\label{eq:decreaseAlpha}
    \min_{k\in\mathbb{Z}_1^M} W(\x_{i,j,k}^+) - \gamma_i W(\x_{i,j}) \leq 0,
\end{flalign}
where $\x_{i,j}$ defines a vertex in $\Tcal$ ($i\in\mathbb{Z}_1^{m_\Tcal}$, $j \in \mathbb{Z}_0^n$). Because there are many potential successor states $\{\x_{i,j,k}^+\}_{k=1}^M$ based on the input samples associated with each state, $\{\bu_{i,j,k}\}_{k=1}^M$, the successor state that minimizes \eqref{eq:decreaseAlpha} is chosen. A single input sample per sampled state is allowable as well.

Thus, on each simplex $\sigma_i\in\Tcal$, the nearby data samples of $\{\x_{i,j}\}_{j=0}^n$ that make up the vertices of $\sigma_i$ are used to inform the \ac{bf} condition applied to all $\x \in\sigma_i.$ Condition \eqref{eq:decreaseAlpha} provides a conservative approximation of the proximal function, \eqref{eq:proxGamma}. The selection of $\gamma_i$ is also inherently tied to the sign of $W(\x_{i,j})$ at the vertex points, $\{\x_{i,j}\}_{j=0}^n$. If $W(\x_{i,j})$ is negative for all $j\in\mathbb{Z}_0^n$, then $\gamma \in \mathbb{R}_0^1$ provides the least conservative \ac{bf} condition. If $W(\x_{i,j})$ is positive for all $j\in\mathbb{Z}_0^n$, then $\gamma \in \mathbb{R}_1^{\underline{\infty}}$ allows for the system to have less constrained evolution over time. For simplices with a mix of positive and negative $W(\x_{i,j})$ at the vertices, whichever $\gamma \in \mathbb{R}_0^{\underline{\infty}}$ that provides a feasible condition is used.

In Theorem \ref{thm:invSetConditions}, the specific forms of $W$ and $\gamma$ are used to develop conditions for a \ac{cpa} barrier function. Later, an optimization problem is developed to learn both $W$ and $\gamma$, while maximizing the volume of the safe set.

\begin{theorem}\label{thm:invSetConditions}
    Consider \eqref{eq:dynSys}, and let Assumption \ref{assum:Lipschitz} hold. Let $\Ucal$ be a convex set. Let the data set $\Dcal = \{\x_{z},\{\bu_{z,k}, \x_{z,k}^+\}_{k=1}^M\}_{z=1}^N$ be $NM$ one step trajectories generated by \eqref{eq:dynSys} in $(\Xcal,\Ucal)$. Let $\Tcal = \{\sigma_i\}_{i=1}^{m_{\Tcal}} \subseteq \Xcal$ be a triangulation over $\{\x_z\}_{z=0}^N$.  Let $\gamma = \{\gamma_i\}_{i=1}^{m_\Tcal} \subset \IR.$ Let $\W = \{W_{\x}\}_{\x \in \mathbb{E}_{\Tcal}}\subset \mathbb{R},$ where $W:\map{n}{}$ is the \ac{cpa} interpolation of $\W$ on $\Tcal$ for any $\x\in\sigma_i$. Define the $\bar{W}$ as
    \begin{flalign}\label{eq:Wbar}
        & \bar{W} = \begin{cases}
             W(\x) \quad &\forall \x \in \Tcal \\
             \epsilon \quad &\forall \x \notin \Tcal,
        \end{cases}
    \end{flalign}
    where $\epsilon \in \mathbb{R}_{0}^{\underline{\infty}}$.
    Let $\bar{W}$ satisfy
    \begin{subequations}
    \begin{align}
        \bar{W}(\x) \geq \epsilon, \quad \quad &\forall \x^+ \notin \Tcal, \label{eq:wLabel}\\
        W(\x) \geq -\rho, \quad \quad &\forall \x\in\mathbb{E}_\Tcal, \\
        |\nabla \bar{W}_i| \leq b, \quad \quad &\forall i \in \mathbb{Z}_1^{m_{\Tcal}}, \label{eq:gradW} \\
        \min_{k\in\mathbb{Z}_1^M}\bigl(\bar{W}(\x_{i,j,k}^+) - \gamma_i W(\x_{i,j}) + b L_g c_i\bigr) < 0,  \quad \quad &\forall i \in \mathbb{Z}_1^{m_{\Tcal}}, \forall j \in \mathbb{Z}_0^n, \label{eq:decreaseW} \\
        \gamma_i \geq 0, \quad \quad &\forall i \in \mathbb{Z}_1^{m_\Tcal} \label{eq:boundAlpha}
    \end{align}
    \end{subequations}
    where $L_g$ is the Lipschitz constant of \eqref{eq:dynSys} and $\nabla W(\x)=\nabla W_i$ for $\x\in\sigma_i$. Moreover, $\rho \in \mathbb{R}_{0}^{\underline{\infty}}$ and
    \begin{flalign}\label{eq:c}
        c_i \geq \max_{r,s\in \mathbb{Z}_0^n} \norm{\bmat{\x_{i,r}\\ \bu_{i,r}}-\bmat{\x_{i,s}\\\bu_{i,s}}}_2,
    \end{flalign}
    where $\bu_{i,r}$ and $\bu_{i,s}$ depend on the minimization in \eqref{eq:decreaseW}.
    Then, the sublevel set $\Scal \defeq \{\x \in \Xcal \mid \bar{W}(\x) \leq 0\}$ is a control invariant set. Further, because $\Scal \subseteq \Xcal$, $\Scal$ is a safe set.
\end{theorem}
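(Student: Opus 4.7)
The plan is to invoke Lemma \ref{lem:invSetCond} as the invariance certificate, verifying its descent inequality \eqref{eq:decreaseBF} for every $\x \in \Xcal$ by splitting into the cases $\x \in \Tcal$ and $\x \notin \Tcal$. I would take the decision function to be $\gamma(y,\x) = \gamma_i\, y$ when $\x$ lies in a simplex $\sigma_i \in \Tcal$, and the proximal function \eqref{eq:proxGamma} on $\Xcal \setminus \Tcal$. Condition \eqref{eq:boundAlpha} makes this $\gamma$ satisfy $\gamma(0,\x) = 0$ and $\langle \gamma(y,\x), y \rangle \geq 0$ simplex-wise, while the proximal extension satisfies them trivially as in the proof of Lemma \ref{lem:invSetCond}. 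Off $\Tcal$, $\bar{W}(\x) = \epsilon > 0$ and $\gamma(\bar{W}(\x),\x) = \infty$, so \eqref{eq:decreaseBF} is immediate for any $\bu \in \Ucal$.

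Inside $\Tcal$ the argument proceeds constructively. For $\x \in \sigma_i$, I would write $\x = \sum_{j=0}^n \lambda_j \x_{i,j}$ in barycentric coordinates, let $k_j^\star$ be an index attaining the minimum in \eqref{eq:decreaseW} for the vertex $(i,j)$, and set $\bu = \sum_{j=0}^n \lambda_j \bu_{i,j,k_j^\star}$, which lies in $\Ucal$ by convexity. Applying Lemma \ref{lemma:lipBound} to the Lipschitz map $(\x,\bu) \mapsto g(\x,\bu)$ on the joint-space simplex of diameter $c_i$ (cf. \eqref{eq:c}) yields $\|g(\x,\bu) - \sum_j \lambda_j \x_{i,j,k_j^\star}^+\| \leq L_g c_i$. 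The gradient bound \eqref{eq:gradW} together with the piecewise-affine representation of $W$ from Lemma \ref{lemma_gradW} then lifts this estimate to
\[ \bar{W}(g(\x,\bu)) \;\leq\; \sum_{j=0}^n \lambda_j \bar{W}(\x_{i,j,k_j^\star}^+) \;+\; b\,L_g\,c_i. \]
Forming the $\lambda_j$-weighted sum of the strict vertex inequalities \eqref{eq:decreaseW} absorbs the $bL_gc_i$ slack and gives $\bar{W}(g(\x,\bu)) < \gamma_i \sum_j \lambda_j W(\x_{i,j}) = \gamma_i W(\x) = \gamma(\bar{W}(\x),\x)$, i.e., \eqref{eq:decreaseBF}. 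Lemma \ref{lem:invSetCond} then certifies control invariance of $\Scal$, and $\Scal \subseteq \Xcal$ by construction yields the safe-set conclusion.

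The main obstacle is the displayed lift inequality, because $\bar{W}$ is only piecewise-$b$-Lipschitz: it jumps to $\epsilon$ across $\partial \Tcal$. I would therefore partition the argument according to whether $g(\x,\bu)$ and each successor $\x_{i,j,k_j^\star}^+$ sits inside or outside $\Tcal$. Condition \eqref{eq:wLabel} is tailored exactly for this: whenever a sampled successor exits $\Tcal$, the corresponding predecessor vertex is forced to carry $\bar{W} \geq \epsilon$, which keeps the weighted average on the right of the displayed bound from being dominated by the constant $\epsilon$ that $\bar{W}(g(\x,\bu))$ takes when it also exits. Carefully tracking this interaction between the $bL_gc_i$ slack inside $\Tcal$ and the fixed jump $\epsilon$ outside it is the bookkeeping that the hypotheses have been engineered to make work out, after which the proof collapses to the chain of inequalities sketched above.
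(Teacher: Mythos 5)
Your proposal follows essentially the same route as the paper's proof: write $\x$ in barycentric coordinates, take $\bu^\star$ as the corresponding convex combination of sampled inputs (using convexity of $\Ucal$), bound $\bigl|\bar{W}(g(\x,\bu^\star)) - \sum_j \lambda_j \bar{W}(\x_{i,j,k}^+)\bigr|$ by $bL_gc_i$ via the gradient bound \eqref{eq:gradW} and Lemma \ref{lemma:lipBound} on the joint state--input simplex, and then close with the convex combination of the vertex inequalities \eqref{eq:decreaseW} and Lemma \ref{lem:invSetCond}. The one place you go beyond the paper is in explicitly flagging that $\bar{W}$ is only piecewise Lipschitz across $\partial\Tcal$ and that \eqref{eq:wLabel} is what controls the successors that exit $\Tcal$ --- a subtlety the paper's proof passes over by citation --- though you leave that bookkeeping as a sketch rather than completing it.
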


\begin{proof}
     Consider \eqref{eq:decreaseBF} and let $\gamma(W(\x)) = \gamma_i W(\x),$ where $\gamma(W(\x)) = \gamma_i W(\x)$ when $\x\in\sigma_i.$ By Constraint \eqref{eq:boundAlpha}, $\gamma_i W(\x)$ satisfies the condition $\langle \gamma_i W(\x), W(\x) \rangle \geq 0$ $\forall W(\x) \in \Xcal$ and $\forall \gamma_i \in \Gamma.$ Therefore, the condition $\bar{W}(g(\x,\bu)) - \gamma_i W(\x) \leq 0$ is a specific instance of \eqref{eq:decreaseBF}. Constraint \eqref{eq:decreaseW} then assures that this instance of \eqref{eq:decreaseBF} holds for all $\x \in \Tcal,$ which proven below. 
     
     By definition, any point $\x \in \sigma_i$ is some convex combination of the vertex points, i.e. $\x = \sumLam \x_{i,j}.$ Therefore, for any $\x \in\sigma_i$ the barrier condition can be expressed as 
     \begin{flalign*}
           & \Wb(g(\sumLam \x_{i,j},\bu^*)) - \sum_{j=0}^n\lambda_j \gamma_i\Wb(\x_{i,j}) \leq 0,
     \end{flalign*}
     where $\sum_{j=0}^n\lambda_j \gamma_i\Wb(\x_{i,j}) =  \gamma_i\Wb(\sum_{j=0}^n\lambda_j\x_{i,j})$ due to linearity of $\gamma_iW(\x)$ on a simplex. Here, $\bu^*$ is some input which ensures the barrier condition holds. Consider $\sumLam \Wb(g(\x_{i,j},\bu_{i,j,k}))$, where $\bu_{i,j,k}$ some sampled input.
     By adding and subtracting $\sumLam \Wb(g(\x_{i,j},\bu_{i,j,k}))$, the barrier condition is also equivalent to
     \begin{flalign*}
         & \sumLam \Wb(g(\x_{i,j},\bu_{i,j,k})) + \Wb(g(\sumLam \x_{i,j},\bu^*)) - \sumLam \Wb(g(\x_{i,j},\bu_{i,j,k})) - \sum_{j=0}^n\lambda_j \gamma_i\Wb(\x_{i,j}) \leq 0.
     \end{flalign*}
     The expression is bounded above by
    \begin{flalign*}
        & \sumLam \Wb(g(\x_{i,j},\bu_{i,j,k})) +  \delta - \sum_{j=0}^n\lambda_j \gamma_i W(\x_{i,j}) \leq 0,
    \end{flalign*}
    where $\delta =  \absVal{\Wb(g(\sumLam \x_{i,j},\bu^*)) {-} \sumLam \Wb(g(\x_{i,j},\bu_{i,j,k}))}.$

    One valid candidate input is $\bu^*=\sumLam \bu_{i,j,k}$, which lies in the convex hull of $\bu_{i,j,k}\in\Ucal$. 
    Therefore, 
    $\delta=\absVal{\Wb(g(\sumLam \x_{i,j},\sumLam \bu_{i,j,k})) - \sumLam \Wb(g(\x_{i,j},\bu_{i,j,k}))}.$
    By \citep{giesl2012construction} and \citep{strong2024data}, this expression is bounded above by \\$b \norm{g(\sumLam \x_{i,j},\sumLam \bu_{i,j}) - \sumLam g(\x_{i,j},\bu_{i,j,k})}_2$. Using Lemma \ref{lemma:lipBound}, \\$b \norm{g(\sumLam \x_{i,j},\sumLam \bu_{i,j,k}) - \sumLam g(\x_{i,j},\bu_{i,j,k})}_2 \leq b L c_i$.

    Therefore for any $\x\in\sigma_i$ and for the feasible input $\bu^* = \sumLam \bu_{i,j,k}$, $\bar{W}(g(\x,\bu)) - \gamma_i W(\x)$ is bounded above by \eqref{eq:decreaseW}. By assumption, \eqref{eq:decreaseW} holds on each vertex of $\sigma_i.$ By convexity of linear inequalities, it therefore holds across all $\x\in\sigma_i.$ Because \eqref{eq:decreaseW} is enforced across all $\sigma \in \Tcal \subseteq \Xcal,$ \eqref{eq:decreaseBF} from Lemma \ref{lem:invSetCond} holds for all $\x \in \Tcal.$ By Lemma \ref{lem:invSetCond}, $\Scal \subseteq \Tcal$ is an invariant set. Additionally, because $\Tcal \subseteq \Xcal$, $\Scal$ is constraint admissible and therefore, a safe set.
\end{proof}

\begin{remark}
    If the system is autonomous, Theorem \ref{thm:invSetConditions} can simply consider $\bu_i = 0$ for all $i \in \mathbb{Z}_0^N$. 
\end{remark}

Constraint \eqref{eq:decreaseW} enforces the \ac{bf} condition \eqref{eq:decreaseBF} of Lemma \ref{lem:invSetCond}, with an additional error term on each vertex of the simplex, which results in \eqref{eq:decreaseBF} holding for all $\x \in\Xcal$. This error term corresponds to those used in \ac{cpa} Lyapunov function literature \cite[Theorem 2.10]{giesl2014computation}. It is also reminiscent of methods that use sampled $\epsilon-$nets to validate constraint adherence. For $\epsilon$-net methods, the maximum distance between sampled points, along with the Lipschitz continuity of the barrier function and the dynamical system, are used to ensure that function conditions hold for all $\x \in \Xcal$ \citep{robey2020learning,jin2020neural}. In contrast, the triangulation, $\Tcal$, and the corresponding structure imposed on $W$ in Theorem \ref{thm:invSetConditions} allow local relationships between points to be considered (via \eqref{eq:c}). As a result, Theorem \ref{thm:invSetConditions} only requires dense sampling in regions where it needed to successfully impose constraints, while $\epsilon$-net methods require uniform dense sampling across $\Xcal$ -- inherently demanding more samples.

Conservatism is introduced in $\Scal$ through the incorporation of the error term in Constraint \eqref{eq:decreaseBF}. This creates a limitation, as Theorem \ref{thm:invSetConditions} is only capable of determining safe sets that are contractive at the boundary. Any set where a point on the boundary maps to the boundary again after application of \eqref{eq:dynSys} would immediately violate \eqref{eq:decreaseW}.

\subsection{Implementation via Iterative Convex Overbounding}\label{sec:implement}

An optimization problem is developed to learn the \ac{cpa} \ac{bf}, $W = \{ W_\x \}_{\x \in \mathbb{E}_{\Tcal}}$, the decision function and $\Gamma = \{\gamma_i\}_{i=1}^{m_\Tcal}$ of system, while maximizing the volume of the safe set. The resulting problem is nonconvex due to the bilinearity of \eqref{eq:decreaseW}. Therefore, we use \ac{ico}, a nonconvex optimization technique \citep{warner2017iterative}. \ac{ico} is often used in control design for problems with bilinear and polynomial design variables \citep{warner2017iterative,sebe2018sequential,de2000convexifying}. In \ac{ico}, a design variable, $\X$, is decomposed to $\underline{\X} + \delta \X,$ where $\underline{\X}$ is constant and $\delta \X$ is a design variable. 
Any non-convexity involving $\delta \X$ is bounded above by convex constraints. Then, the new convex optimization problem is iteratively solved -- updating $\X$ with the solved $\delta \X$ at each iteration. 
A benefit of \ac{ico} methods are their convergence guarantees. If an \ac{ico} problem begins with a feasible solution, then the objective function of the optimization problem will decrease monotonically each iteration -- converging to a local minimum \citep{warner2017iterative}.

Problem \ref{prob:icoW} describes a single iteration of \ac{ico} to solve conditions in Theorem \ref{thm:invSetConditions}. The constant values of the \ac{bf},  $\underline{\W}=\{\underline{W}_{\x}\}_{\x \in \mathbb{E}_{\Tcal}}$ and classifier function, $\underline{\Gamma} = \{ \underline{\gamma}_i \}_{i=1}^{m_\Tcal}$, are assumed to be known from an initialization or a previous iteration of \ac{ico}. The aim is to determine $\delta \W$ and $\delta \Gamma$.

Complexity is introduced in the \ac{ico} problem by the minimization in \eqref{eq:decreaseW}, which is used to select the best input sample for a given state. In Problem \ref{prob:icoW}, we assume a constant input across each simplex. We also assume there is some selection of inputs for each simplex $\{\bu_{\underline{k}}\}_{\underline{k}=1}^{m_\Tcal}$ available from an initialization or a previous \ac{ico} iteration. The goal of Problem \ref{prob:icoW} then is to choose new  inputs for each simplex, $\Xi$ that improves the \ac{bf} condition with respect to $\underline{\W},$ $\underline{\Gamma},$ and $\underline{\Xi},$ as seen in \eqref{eq:chooseK}. 

\begin{problem}\label{prob:icoW}
Let the data set $\Dcal = \{\x_{z},\{\bu_{z,k}, \x_{z,k}^+\}_{k=1}^M\}_{z=1}^N$ be $NM$ one-step trajectories generated by \eqref{eq:dynSys} in $(\Xcal,\Ucal).$  Let $\Gamma = \underline{\Gamma} +\delta\Gamma = \{\gamma_i = \underline{\gamma}_i + \delta \gamma_i\}_{i=1}^{m_\Tcal} \subset \IR$ and $\Theta = \{\theta_\x\}_{\x \in \mathbb{E}_{\Tcal}}\subset \mathbb{R}.$ Let $\W = \underline{\W} +\delta \W = \{ W_\x = \underline{W}_{\x} +\delta W_\x\}_{\x \in \mathbb{E}_{\Tcal}}\subset \mathbb{R},$ where $W:\map{n}{}$ is the \ac{cpa} interpolation of $\W$ on $\Tcal$ for any $\x\in\sigma_i$. Define the $\bar{W}$ as in \eqref{eq:Wbar}. Finally, define $\Xi =\{\xi_i\}_{i=1}^{m_\Tcal} \subset \mathbb{Z}_1^M$ and $\underline{\Xi} =\{\underline{\xi}_i\}_{i=1}^{m_\Tcal} \subset \mathbb{Z}_1^M$, where \eqref{eq:icoAll} is satisfied for $\underline{\Xi}$, $\underline{\W}$, and $\underline{\Gamma}.$ The convex optimization problem is defined as
\begin{equation}
    \min_{\delta \W,\delta \Gamma, b, \Theta} C \nonumber
\end{equation}
\begin{subequations}\label{eq:icoAll}
\begin{align}
        \bar{W}(\x)\geq  \epsilon, \quad \quad &\forall \x^+ \notin \Tcal, \label{eq:wLabel_ico}\\
        \bar{W}(\x) \geq -\rho, \quad \quad &\forall \x \in \mathbb{E}_{\Tcal}, \\
        |\nabla \bar{W}_i| \leq b, \quad \quad &\forall i \in \mathbb{Z}_1^{m_{\Tcal}}, \label{eq:wGrad_ico} \\
        \M(\x_{i,j,\xi_i}^+, \x_{i,j}) \preceq \theta_{\x_{i,j}}\I, \label{eq:newDecrease} \quad \quad &\forall i \in \mathbb{Z}_1^{m_{\Tcal}}, \forall j \in \mathbb{Z}_0^n, \\
        \gamma_i \geq 0, \quad \quad &\forall i \in \mathbb{Z}_1^{m_\Tcal} \label{eq:alpha_ico} \\
        \theta_\x \geq 0, \quad \quad &\forall \x \in \mathbb{E}_{\Tcal}, \label{eq:theta_ico}
\end{align}
\end{subequations}
where
\begin{flalign}\label{eq:decreaseW_overbound}
    &  \M(\x_{i,j,\xi_i}^+, \x_{i,j}) = \nonumber \\ & \bmat{\bar{{W}}(\x^+_{i,j,\xi_i})-\underline{\gamma}_i \underline{W}(\x_{i,j}) - \delta \gamma_i \underline{W}(\x_{i,j})- \underline{\gamma}_i\delta  W(\x_{i,j})+ \underline{b} L_g c_{i,k} & \delta \gamma_i & \delta   W(\x_{i,j}) \\ * & -2 & 0 \\ * & * &-2},
\end{flalign}

\begin{flalign}\label{eq:chooseK}
    \xi_i = \argmin_{\xi\in\mathbb{Z}_1^M}  \underline{\bar{W}}(\x^+_{i,j,\xi})  \leq \underline{\bar{W}}(\x^+_{i,j,\underline{\xi}_i}) \quad \forall j \in \mathbb{Z}_0^n,
\end{flalign}
 $C$ is some convex cost function, $c_{i} = \max_{r,s\in \mathbb{Z}_0^n} \norm{\x_{i,r}-\x_{i,s}}_2$, and $\epsilon,\rho \in \IR_{>0}$ are equivalent to the values in Theorem \ref{thm:invSetConditions}.
\end{problem}

\begin{remark}
    For practical implementation, we use  $\xi_i = \argmin_{\xi\in\mathbb{Z}_1^M} \bigl(\max_{j\in\mathbb{Z}_0^n} \underline{{W}}(\x^+_{i,j,\xi}) - \underline{\gamma}_i\underline{W}(\x_{i,j}) + b L_g c_{i}\bigr)$ rather than \eqref{eq:chooseK}.
\end{remark}

In Problem \ref{prob:icoW}, the matrix \eqref{eq:decreaseW_overbound} bounds the barrier condition \eqref{eq:decreaseW} above (see Section \ref{sec:lemProof}).  Here, $M(\x_{i,j}, \x_{i,j}^+)$ is bounded above by the slack variable, $\theta(\x_{i,j})$, rather than $0$. The slack variables relax the barrier condition, allowing for a feasible initialization is found by iteratively minimizing $\Theta= \{\theta_\x\}_{\x \in \mathbb{E}_{\Tcal}}.$  
Lemma \ref{lem:icoSoln} states the conditions for which the solution to Problem \ref{prob:icoW} produces a valid barrier function and safe set. The proof of Lemma \ref{lem:icoSoln} can be found in Section \ref{sec:lemProof}.

\begin{lemma}\label{lem:icoSoln}
    The solution of Problem \ref{prob:icoW} adheres to the conditions of Theorem \ref{thm:invSetConditions} if $\nexists \x \in \mathbb{E}_{\Tcal}$ such that $\theta_\x > 0$. Therefore, the \ac{cpa} function, $W$, that satisfies Problem \ref{prob:icoW} is a barrier function and $\Scal \defeq \{\x\vecdim{n} \mid W(\x) \leq 0\}$ is a valid safe set.
\end{lemma}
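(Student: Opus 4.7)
My plan is to verify, constraint by constraint, that a solution of Problem~\ref{prob:icoW} with $\theta_\x=0$ for every $\x\in\mathbb{E}_\Tcal$ satisfies each hypothesis of Theorem~\ref{thm:invSetConditions}. The labeling constraint \eqref{eq:wLabel_ico}, the lower bound $\bar{W}\geq-\rho$, the gradient bound \eqref{eq:wGrad_ico}, and the sign condition \eqref{eq:alpha_ico} transfer directly to \eqref{eq:wLabel}, \eqref{eq:gradW}, and \eqref{eq:boundAlpha}. The only nontrivial mapping is showing that the LMI \eqref{eq:newDecrease} collapses to the bilinear scalar condition \eqref{eq:decreaseW}, since the entire purpose of introducing $\M$ in \eqref{eq:decreaseW_overbound} is to convexify the product $\gamma_i W(\x_{i,j})$.

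The key step is a Schur complement applied to $\M(\x_{i,j,\xi_i}^+,\x_{i,j})\preceq 0$. Because the $(2,2)$ and $(3,3)$ entries equal $-2<0$, the LMI is equivalent to
\begin{equation*}
\bar{W}(\x^+_{i,j,\xi_i})-\underline{\gamma}_i\underline{W}(\x_{i,j})-\delta\gamma_i\underline{W}(\x_{i,j})-\underline{\gamma}_i\delta W(\x_{i,j})+\underline{b}L_g c_i+\tfrac{1}{2}(\delta\gamma_i)^2+\tfrac{1}{2}(\delta W(\x_{i,j}))^2\leq 0.
\end{equation*}
I would then expand the bilinear term $\gamma_i W(\x_{i,j})=(\underline{\gamma}_i+\delta\gamma_i)(\underline{W}(\x_{i,j})+\delta W(\x_{i,j}))$ and invoke the completion of squares
\begin{equation*}
\tfrac{1}{2}(\delta\gamma_i)^2+\tfrac{1}{2}(\delta W(\x_{i,j}))^2+\delta\gamma_i\,\delta W(\x_{i,j})=\tfrac{1}{2}\bigl(\delta\gamma_i+\delta W(\x_{i,j})\bigr)^2\geq 0
\end{equation*}
to collect the $\delta$-terms and arrive at $\bar{W}(\x^+_{i,j,\xi_i})-\gamma_i W(\x_{i,j})+\underline{b}L_g c_i\leq 0$. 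Because this bound is attained at the selected input index $\xi_i\in\mathbb{Z}_1^M$, the minimum over $k$ in \eqref{eq:decreaseW} is no larger, so \eqref{eq:decreaseW} is satisfied at every vertex of every simplex of $\Tcal$.

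With every hypothesis of Theorem~\ref{thm:invSetConditions} verified, the theorem then gives that $W$ is a CPA barrier function and $\Scal=\{\x\in\Xcal\mid W(\x)\leq 0\}$ is a constraint-admissible control-invariant set, that is, a safe set. I expect the main obstacle to be the algebraic bookkeeping of the Schur-complement step together with reconciling the frozen Lipschitz proxy $\underline{b}L_g c_i$ used inside the overbound with the current gradient-bound decision variable $b$ in \eqref{eq:wGrad_ico}; the cleanest argument is to observe that at any fixed point of the ICO iteration $\underline{b}=b$, so the Theorem's $bL_g c_i$ requirement is satisfied exactly, and an intermediate iterate remains valid whenever the update contracts $b\leq\underline{b}$. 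A minor secondary point is that the LMI yields a nonstrict inequality while \eqref{eq:decreaseW} is written strictly; this is absorbed by either a positive slack margin or by noting that the supporting analysis within the proof of Theorem~\ref{thm:invSetConditions} itself chains only nonstrict bounds.
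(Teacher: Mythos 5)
Your proof is correct and follows essentially the same route as the paper's: verify constraints \eqref{eq:wLabel_ico}--\eqref{eq:wGrad_ico} and \eqref{eq:alpha_ico} directly, then use a Schur complement plus the completion-of-squares bound $\tfrac{1}{2}(\delta\gamma_i)^2+\tfrac{1}{2}(\delta W(\x_{i,j}))^2\geq-\delta\gamma_i\,\delta W(\x_{i,j})$ to show \eqref{eq:newDecrease} with $\theta_\x=0$ implies \eqref{eq:decreaseW}. Your two closing caveats --- reconciling the frozen $\underline{b}$ in \eqref{eq:decreaseW_overbound} with the decision variable $b$ in \eqref{eq:wGrad_ico}, and the nonstrict-versus-strict inequality mismatch with \eqref{eq:decreaseW} --- are genuine subtleties that the paper's own proof passes over silently, so your treatment is, if anything, the more careful one.
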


In Algorithm \ref{alg:ico}, Problem \ref{prob:icoW} is iteratively solved to determine a valid barrier function, $W,$ decision function, $\Gamma$, and safe set, $\Scal$. The cost function of Problem \ref{prob:icoW} aims to minimize the slack variables, $\Theta$. Problem \ref{prob:icoW} is iteratively solved until all slack variables are zero and, as a result, a valid \ac{bf} and safe set have been determined. Then, the cost function changes to minimizing all positive values of $\W$, increasing the area of the safe set, $\Scal$.

\noindent
\begin{minipage}[t]{0.5\textwidth}
  \begin{cor}\label{cor:safeSet}
    If a valid initial barrier function is found by lines 2-7 of Algorithm \ref{alg:ico}, then Algorithm \ref{alg:ico} will produce a valid barrier function, $W$, and the safe set, $\Scal$.
\end{cor}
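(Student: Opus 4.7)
The plan is to combine three ingredients: (i) the convex overbounding guarantee that feasibility is preserved across ICO iterations, (ii) the monotonic descent property of ICO, and (iii) Lemma \ref{lem:icoSoln}, which characterizes precisely when a solution to Problem \ref{prob:icoW} yields a valid barrier function.

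First, I would observe that by the hypothesis, lines 2--7 of Algorithm \ref{alg:ico} terminate with $(\W, \Gamma, \Xi)$ such that all slack variables $\theta_\x$ are zero. By Lemma \ref{lem:icoSoln}, this tuple already satisfies every hypothesis of Theorem \ref{thm:invSetConditions}, so the corresponding $W$ and $\Scal \defeq \{\x \mid W(\x) \leq 0\}$ are a valid barrier function and safe set at the start of the second phase.

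Next, I would argue that the second phase of Algorithm \ref{alg:ico}, which replaces the cost with one minimizing the positive values of $\W$ (thereby growing $\Scal$), preserves this validity throughout. The key observation is that the current iterate $(\underline{\W}, \underline{\Gamma}, \underline{\Xi})$ is itself feasible for the convex subproblem solved at the next iteration: setting $\delta\W = 0$, $\delta\Gamma = 0$, and $\theta_\x = 0$ recovers the preceding solution because the overbound matrix \eqref{eq:decreaseW_overbound} is tight at $\delta\W = 0, \delta\Gamma = 0$. Hence the convex subproblem is always feasible, and the returned solution, by virtue of \eqref{eq:decreaseW_overbound} being a valid overbound of the bilinear barrier condition, still satisfies \eqref{eq:decreaseW}. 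Applying Lemma \ref{lem:icoSoln} at every iteration confirms that $W$ remains a barrier function and $\Scal$ remains a safe set throughout the iteration, regardless of how many iterations the algorithm executes before termination. Combined with the ICO monotonic descent property cited in the text, the sequence of costs is non-increasing, so the algorithm produces a well-defined output $W$, which is the desired valid barrier function, with corresponding safe set $\Scal$.

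The main subtlety, and what I would treat most carefully, is verifying that feasibility is genuinely inherited across iterations when the cost changes from slack minimization to volume maximization. Since the constraint set of Problem \ref{prob:icoW} does not depend on the cost and the overbound is tight at the previous iterate, feasibility transfers without issue; the rest is an application of Lemma \ref{lem:icoSoln} and Theorem \ref{thm:invSetConditions}.
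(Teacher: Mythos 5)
Your proposal is correct and follows essentially the same route as the paper: the paper establishes the corollary implicitly through Lemma \ref{lem:icoSoln} (the overbound \eqref{eq:decreaseW_overbound} with $\theta_\x = 0$ implies the conditions of Theorem \ref{thm:invSetConditions}) together with the convergence argument in the appendix (Theorem \ref{thm:icoConverge}), which inherits feasibility across iterations exactly as you do, by noting that $\delta\W = 0$, $\delta\Gamma = 0$ recovers the previous feasible point and that the cost is monotonically non-increasing and bounded below. Your observation that line 8 of Algorithm \ref{alg:ico} fixes $\Theta = 0$ in the second phase, so Lemma \ref{lem:icoSoln} applies at every subsequent iterate, is precisely the step the paper relies on.
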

When a valid initial barrier function can be found, Algorithm \ref{alg:ico} will always produce a true safe set, $\Scal$.  However, note that the safe set $\Scal$ can be the null set if $W(\x) > 0$ for all $\x \in \Xcal$. Sections \ref{sec:convergeICO} and \ref{sec:additionalData} detail the convergence properties of Algorithm \ref{alg:ico} and the process of adding additional data during learning in Algorithm \ref{alg:ico}. 

\subsection{Computation Complexity}

Modeling the \ac{bf} as a \ac{cpa} function allows us to exploit local properties of the function and ensure that the \ac{bf} conditions hold for all $\x \in\Xcal$. These benefits come at the cost of computation.

First, we require a triangulation of the space over the vertex points created by the sampled points $\{\x_z\}_{z=1}^N.$ Forming Delaunay triangulations becomes computationally difficult beyond 6 dimensions \citep{hornus2008efficient}, but 

\end{minipage}\hfill
\begin{minipage}[t]{0.45\textwidth}
  \vspace{-3.5mm}  
  \begin{algorithm2e}[H]
    \caption{\ac{ico} for \ac{bf} synthesis}\label{alg:ico}
    \KwIn{$\Dcal$}
    Initialize $\W$ and $\Gamma$ (Section \ref{sec:icoInit})\;
    Set $C = \sum_{\x \in \mathbb{E}_\Tcal}\theta_\x$ in Problem \ref{prob:icoW}\;
    $\Delta C = \infty$\;
    \While{$\exists \x \in \mathbb{E}_{\Tcal}$ s.t. $\theta_\x > 0$ and $\Delta C \geq \chi$}{
     Solve Problem \ref{prob:icoW}\;
     $\underline{W} = \underline{W} + \delta W$, $\underline{\Gamma} = \underline{\Gamma} + \delta \Gamma$\;
     $\Delta C = C - C_{prev}$\;
    }
    \If{$\nexists \x \in \mathbb{E}_{\Tcal}$ s.t. $\theta_\x > 0$}{
    Set $C = \sum_{\x \in \mathbb{E}_\Tcal} \max(0,\W_\x)$, $\Delta C = \infty$\;
    Set $\Theta = 0$\;
    }
    \While{$\Delta C \geq \chi$}{
    Solve Problem \ref{prob:icoW}\;
    $\underline{W} = \underline{W} + \delta W$, $\underline{\Gamma} = \underline{\Gamma} + \delta \Gamma$\;
    $\Delta C = C - C_{prev}$\;
    }
    \KwResult{$W,\Gamma,\Scal$}
  \end{algorithm2e}
\end{minipage}

\noindent incremental algorithms can be used instead. Recent triangulation methods can handle up to 17 dimensions \citep{yadav2024resource}. Problem \ref{prob:icoW} requires solving $2N +m_\Tcal + 2m_\Tcal n$ linear constraints and $m_\Tcal(n+1)$ LMI constraints with matrix dimension $3\times 3$ (this dimension is invariant to state dimension, etc) each iteration. We use interior point methods to solve the series of convex optimization problems that result from \ac{ico}, which has polynomial complexity \citep{vandenberghe1996semidefinite}.

\section{Numerical Experiments}\label{sec:numEx}

Algorithm \ref{alg:ico} was tested on a nonlinear autonomous system and a linear non-autonomous system. In Figures \ref{fig:nonlinSys} and \ref{fig:linearSys_nonAut}, respectively, the boundary of the safe set, $W=0$, found by Algorithm \ref{alg:ico} is denoted by the light blue line. Both systems are then compared to a model based method of determining safe sets. In \ref{fig:nonlinSys}, the safe set of the nonlinear system is compared to the largest level set of the Lyapunov function, $V = (x^{(1)})^2+ (x^{(2)})^2$. Here, the safe set found by Algorithm \ref{alg:ico} actually has a larger volume than this model-based method. In \ref{fig:linearSys_nonAut}, the safe set found is compared to the maximal \ac{ci} set of the system, which is determined by model-based geometric methods \citep{MPT3}. Here, the volume of our safe set approaches that of the maximal \ac{ci} set.

The value of $\gamma$ across on each simplex of the triangulation, $\Tcal \subseteq \Xcal$, is also shown for each system in Figures \ref{fig:nonlinSys_gamma} and \ref{fig:linearSys_nonAut_gamma}, where darker colors show that $\gamma$ is near $0$ and bright yellow denotes high values. In \ref{fig:nonlinSys_gamma} in particular, it can be seen than $\gamma$ approximates the proximal function \eqref{eq:proxGamma}.

For more details of the numerical experiments see Sections \ref{sec:autoSys} and \ref{sec:nonAut}. Additionally, results from a linear, autonomous system are shown in \ref{sec:resLin} and results from implementing an adaptive sampling strategy are shown in \ref{sec:sampleRes}.


\begin{figure}[htbp]
\floatconts
{fig:results}
{\caption{The resulting safe sets found for the nonlinear autonomous system \ref{fig:nonlinSys} and the linear non-autonomous system \ref{fig:linearSys_nonAut}. The safe set boundary is in blue, while safe sets found using model based methods are shown as green polygons. The values of $\gamma$ across the triangulation $\Tcal \subseteq \Xcal$ are shown for the nonlinear autonomous system \ref{fig:nonlinSys_gamma} and linear non-autonomous system \ref{fig:linearSys_nonAut_gamma}. Black denotes a minimal value of $\gamma$ (near $0.1$) and bright yellow denotes the maximal value of $\gamma$ (ranging from 12.4 (\ref{fig:nonlinSys_gamma}) to 4.456 (\ref{fig:linearSys_nonAut_gamma})).}}
{
\subfigure[Safe set of nonlinear autonomous system.]{%
\label{fig:nonlinSys}
\includegraphics[width = 0.465\textwidth]{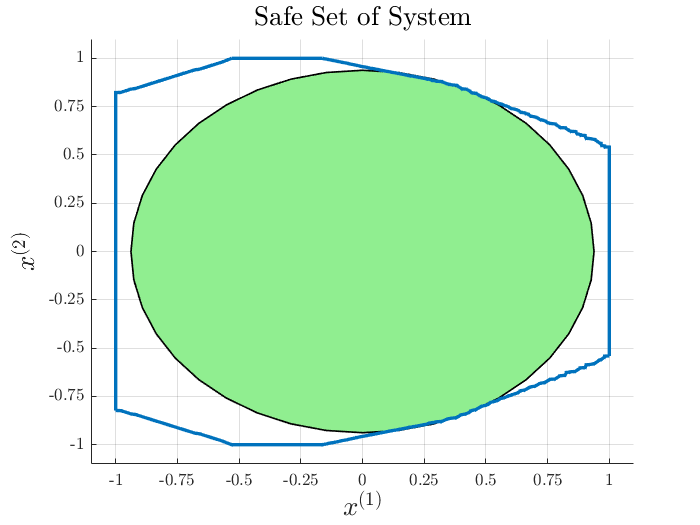}
}\qquad 
\subfigure[Values of $\gamma$ across $\Xcal$ for the nonlinear, autonomous system.]{%
\label{fig:nonlinSys_gamma}
\includegraphics[width = 0.465\textwidth]{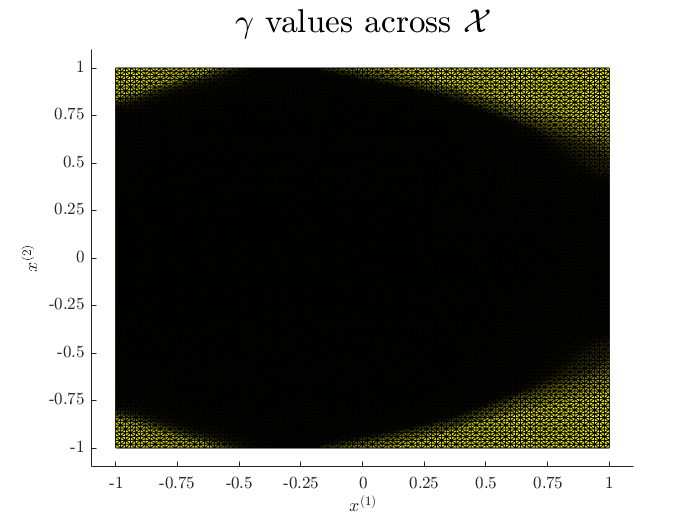}
}
\subfigure[Safe set of linear non-autonomous system.]{%
\label{fig:linearSys_nonAut}
\includegraphics[width = 0.465\textwidth]{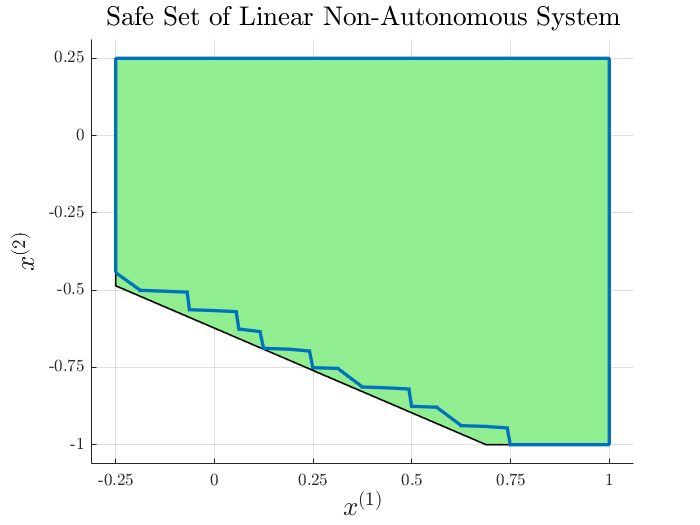}
}\qquad 
\subfigure[Values of $\gamma$ across $\Xcal$ for the linear, non-autonomous system.]{%
\label{fig:linearSys_nonAut_gamma}
\includegraphics[width = 0.465\textwidth]{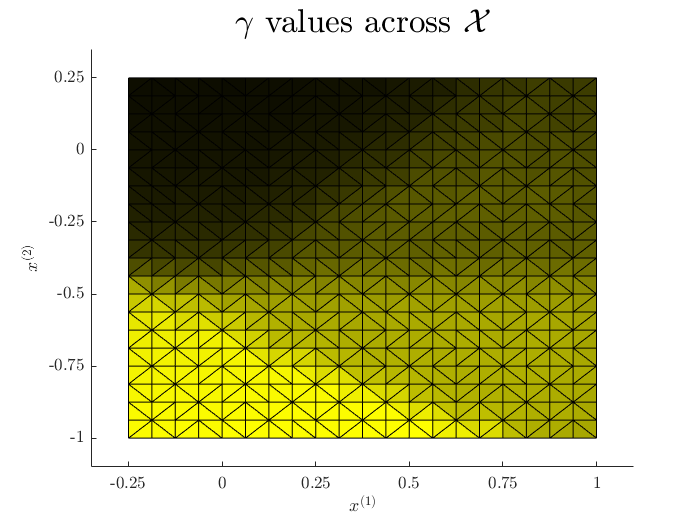}
}
}
\end{figure}




\acks{This work was supported by NSF Grant No. 2303158.}

\section*{References}

\bibliography{biblio}

\appendix

\section{Iterative Convex Overbounding}

\subsection{Proof of Lemma \ref{lem:icoSoln}}\label{sec:lemProof}

\begin{proof}
    In Problem \ref{prob:icoW}, Constraints \eqref{eq:wLabel_ico}-\eqref{eq:wGrad_ico} and \eqref{eq:alpha_ico} are equivalent to Conditions \eqref{eq:wLabel}-\eqref{eq:gradW} and \eqref{eq:boundAlpha} in Theorem \ref{thm:invSetConditions}. When $W(\x_{i,j}) = \underline{W}(\x_{i,j,\xi_i}) + \delta W(\x_{i,j,\xi_i})$ and $\gamma_i = \underline{\gamma}_i + \delta \gamma_i$, Condition \eqref{eq:decreaseW} can be written as
    \begin{flalign*}
        \bar{{W}}(\x^+_{i,j,\xi_i})-\underline{\gamma}_i \underline{W}(\x_{i,j}) - \delta \gamma_i \underline{W}(\x_{i,j})- \underline{\gamma}_i\delta  W(\x_{i,j}) - \delta\gamma_i\delta  W(\x_{i,j})+ b L_g c_i \leq 0,
    \end{flalign*}
    where $\delta\gamma_i\delta  W(\x_{i,j})$ are bilinear design variables. 
    By completion of the square, $\frac{1}{2}(\delta\gamma_i^2 + \delta  W(\x_{i,j}^2)) \geq - \delta\gamma_i\delta  W(\x_{i,j}).$ Therefore, the above expression can be bounded above by
    \begin{flalign*}
        \bar{{W}}(\x^+_{i,j,\xi_i})-\underline{\gamma}_i \underline{W}(\x_{i,j}) - \delta \gamma_i \underline{W}(\x_{i,j})- \underline{\gamma}_i\delta  W(\x_{i,j}) + \frac{1}{2}(\delta\gamma_i^2 + \delta  W(\x_{i,j}^2)) + b L_g c_i.
    \end{flalign*}
    If the above equation is less than 0 (i.e. $\theta_\x = 0$), then Condition \eqref{eq:decreaseW} also holds. This inequality can be equivalently expressed as \eqref{eq:decreaseW_overbound} by applying a Schur complement to the terms $\frac{1}{2}\delta\gamma_i^2$ and $\frac{1}{2}\delta W(\x_{i,j})^2.$ Therefore, if $\W$ and $\Gamma$ are solutions to Problem \ref{prob:icoW}, then they also satisfy Theorem \ref{thm:invSetConditions}.
\end{proof}

\subsection{Initialization of ICO}\label{sec:icoInit}

For nonconvex optimization, convergence is only guaranteed to local optima. Thus, initialization can drastically affect results.  Incorporating slack variables into Problem \ref{prob:icoW} allows for a smart initialization of the \ac{ico} problem.
 
\textit{Initialization:} All vertices $\x \in \mathbb{E}_\Tcal$ that have a corresponding $\x^+ \in \Tcal$ are assigned $W_\x = -\rho$ for the \ac{cpa} function, while all $\x \in \mathbb{E}_\Tcal$ with corresponding $\x^+ \notin \Tcal$ are assigned $W_\x \geq \epsilon.$ The values of $\Gamma$ are assigned $\gamma_i = 0.1$ when all vertices in a simplex $\sigma_i$ have the value $W_\x = -\rho$. Otherwise, $\gamma_i$ is assigned a value of $1.$  This initialization prioritizes having the largest area $\Scal$ possible.

\subsection{Convergence properties of Algorithm \ref{alg:ico}}\label{sec:convergeICO}

Theorem \ref{thm:icoConverge} then shows that the two optimization problems in Algorithm \ref{alg:ico} will always converge to a local minima. 
\begin{theorem}\label{thm:icoConverge}
    Algorithm \ref{alg:ico} converges to a local minimum of Problem \ref{prob:icoW} with $C = \sum_{\x \in \mathbb{E}_\Tcal}$ and Problem \ref{prob:icoW} with $C  = \sum_{\x \in \mathbb{E}_\Tcal} \max(0,W_\x)$. 
\end{theorem}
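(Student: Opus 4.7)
The plan is to prove convergence separately for the two phases of Algorithm \ref{alg:ico} (minimizing the aggregate slack $\sum_\x \theta_\x$, then minimizing $\sum_\x \max(0,W_\x)$), and in each case apply the standard \ac{ico} argument: a convex overbounding problem that recovers the current iterate as a feasible point generates a monotone, bounded-below sequence of objective values whose limit is stationary. The key ingredients are already present in the excerpt: Lemma \ref{lem:icoSoln} establishes that \eqref{eq:decreaseW_overbound} is a valid convex overbound of the bilinear barrier condition, and the slack variables $\theta_\x \geq 0$ guarantee that any $(\underline{\W},\underline{\Gamma})$ is feasible for Problem \ref{prob:icoW} with a sufficiently large slack vector.

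First I would verify the \emph{recovery property}: setting $\delta\W = 0$ and $\delta\Gamma = 0$ in \eqref{eq:decreaseW_overbound} reduces the $(1,1)$ entry of $\M$ to $\underline{\bar{W}}(\x^+_{i,j,\xi_i}) - \underline{\gamma}_i \underline{W}(\x_{i,j}) + \underline{b} L_g c_i$, which (after Schur complementation) is exactly the left-hand side of the constraint that was satisfied with slack $\theta_{\x_{i,j}}$ at the previous iterate. Hence the previous iterate (with its previous slack values) lies in the feasible set of the next convex subproblem. This guarantees the subproblem is always feasible and that the new optimum has cost no larger than the previous cost, yielding monotonic non-increase of $C$ in both phases. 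Boundedness below is immediate: both $\sum_\x \theta_\x$ and $\sum_\x \max(0,W_\x)$ are sums of non-negative quantities. The monotone convergence theorem then gives $\Delta C \to 0$, so the loop exits via the $\Delta C < \chi$ stopping criterion.

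Next I would argue that the limit point is a \emph{local minimum} of Problem \ref{prob:icoW} in the sense the statement intends. At a fixed point of the iteration, the convex overbound is tight (its zero-perturbation value equals the original nonconvex constraint value), and any first-order improving direction for the original nonconvex program would, by the overbound construction of \eqref{eq:decreaseW_overbound}, also be improving in the convexified subproblem near $(\delta\W,\delta\Gamma)=(0,0)$. Since the subproblem is convex with KKT conditions satisfied at the origin, no such direction exists. This is the standard \ac{ico} stationarity argument in \citep{warner2017iterative,sebe2018sequential} and can be cited essentially verbatim. The transition between the two phases is handled by observing that when the first phase terminates with $\theta_\x = 0$ for all $\x$, the current $(\underline{\W},\underline{\Gamma})$ is feasible for Theorem \ref{thm:invSetConditions} by Lemma \ref{lem:icoSoln}, and in particular feasible for the second phase with $\Theta = 0$; the same recovery-and-monotonicity argument applies there.

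The main obstacle is the discrete update of the input-index set $\Xi = \{\xi_i\}$ via \eqref{eq:chooseK}, which is not continuous in the design variables and threatens the monotone-descent argument. I would address this by observing that \eqref{eq:chooseK} only \emph{replaces} $\underline{\xi}_i$ when $\underline{\bar{W}}(\x^+_{i,j,\xi_i}) \leq \underline{\bar{W}}(\x^+_{i,j,\underline{\xi}_i})$, so the $(1,1)$ entry of $\M$ at $(\delta\W,\delta\Gamma)=(0,0)$ can only weakly decrease compared to keeping $\underline{\xi}_i$; hence feasibility of the previous slack-augmented iterate is preserved across the index swap, and the recovery property above remains valid. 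Because $\Xi$ takes values in the finite set $(\mathbb{Z}_1^M)^{m_\Tcal}$, it can change only finitely many times before stabilizing, after which the analysis reduces to standard \ac{ico} convergence on a fixed convex subproblem family, completing the argument.
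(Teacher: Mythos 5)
Your proof follows essentially the same route as the paper's: feasibility of each convex subproblem is inherited by recovering the previous iterate at $(\delta\W,\delta\Gamma)=(\mathbf{0},\mathbf{0})$, giving a monotonically non-increasing cost that is bounded below by \eqref{eq:theta_ico} (resp.\ by non-negativity of $\max(0,W_\x)$), hence convergent, with the phase transition justified by $\Theta=\mathbf{0}$ feasibility exactly as in the paper. You are in fact more careful than the paper on two points it silently skips --- the discrete update of $\Xi$ via \eqref{eq:chooseK} and the stationarity of the limit --- though your claim that $\Xi$ must stabilize after finitely many changes merely because it lives in a finite set is not itself justified (a finite-valued sequence can oscillate); this does not affect the monotone-descent conclusion that both you and the paper actually establish.
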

\begin{proof}
    For the first iteration of lines 3-7, Problem \ref{prob:icoW} will produce a feasible solution by the following logic. The triangulation $\Tcal \subseteq \Xcal$ and the function $W$ is bounded, producing a finite $\nabla W_i$ at each simplex $\sigma_i \in \Tcal.$ Thus, the design variable $b$ will be finite. This fact, along with Assumption \ref{assum:Lipschitz}, the constraint that $A \geq 0$, and the definition of $\bar{W}$ in \eqref{eq:Wbar}, ensure that the matrix $M(\x_{i,j}^+,\x_{i,j,\xi_i})$ is always bounded above and can therefore be bounded above with some finite matrix, $\theta(\x_{i,j})\I.$ 

    From there on, each iteration of lines 3-7 will inherit the feasibility of the previous problem as $\underline{W}+\delta {W}$ and $\underline{\Gamma} +\delta \Gamma$ are always feasible when $\delta W$ and $\delta \Gamma$ are zeros. Therefore, the cost function either decreases or remains the same. The cost function is bounded below (by \eqref{eq:theta_ico}), so the cost function will converge to a local minima. 

    When expanding the safe set of the barrier function (lines 8-16), the if statement on line 8 ensures that the initial $\underline{W}$ and $\underline{\Gamma}$ are feasible solutions to Problem \ref{prob:icoW} with $\Theta = 0.$ Once again, the new cost function is bounded by the bounds place on $\W$ in Problem \ref{prob:icoW}. By the logic above, this section of the algorithm will also converge to a local optima.
\end{proof}

\subsection{Searching for additional data}\label{sec:additionalData}

As the \ac{bf} and safe set are searched, we can add additional data points to facilitate either a feasible or improved solution. For initialization in \ac{ico} (lines 2-7) of Algorithm \ref{alg:ico}, this can be achieved by determining the simplex with the largest value of summed slack variables and adding a point to bisect the simplex using the longest edge bisection algorithm \citep{strong2025adaptive}.
When a valid barrier function is found, we can also add data points to the boundary of the barrier function -- bisecting all simplices where $W$ passes through $0.$ In this way, our iterative algorithm selects regions of the state space that require denser sampling.

To ensure feasibility of the solution, we add points so that the bisected simplices maintain the current barrier function. The added point will have a $W$ value that is the midpoint between the two existing vertex points, and the new simplices will have an $\gamma$ value than maintains the $\gamma$ of the originator simplex.

\section{Autonomous Systems: Numerical Results}\label{sec:autoSys}

\subsection{Results}\label{sec:resLin}

In addition to the results shown in Section \ref{sec:numEx}, we also tested Algorithm \ref{alg:ico} on a linear, autonomous system (described in Section \ref{sec:expDetails}). 

Figure \ref{fig:linearSys} shows the safe set found by Algorithm \ref{alg:ico} in blue -- compared to the maximal invariant set found by model based geometric methods \citep{MPT3} (shown as a green polygon). Figure \ref{fig:linearSysAlpha} shows the values of $\gamma$ over each simplex of the triangulation, $\Tcal \subseteq \Xcal,$ where a darker value indicates $\gamma$ is closer to 0.

Figure \ref{fig:bf_aut} shows the \ac{cpa} \ac{bf} found for the linear and nonlinear autonomous. Each simplex of the triangulation $\Tcal\subseteq \Xcal$ is mapped to an affine function.

\begin{figure}[htbp]
\floatconts
{fig:linEx}
{\caption{The resulting safe sets found for the linear autonomous system \ref{fig:linearSys} The safe set boundary is in blue, while the maximal safe set for the system as found by \citep{MPT3} is in green. In \ref{fig:linearSysAlpha}, the value of $\gamma$ are shown across the simplices of the triangualtion.}}
{
\subfigure[Linear autonomous system.]{%
\label{fig:linearSys}
\includegraphics[width = 0.465\textwidth]{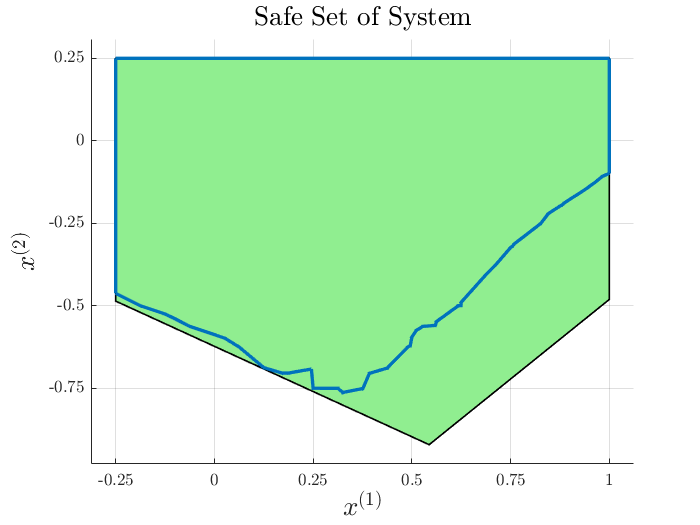}
}\qquad 
\subfigure[Value of $\gamma$ across $\Xcal$.]{%
\label{fig:linearSysAlpha}
\includegraphics[width = 0.465\textwidth]{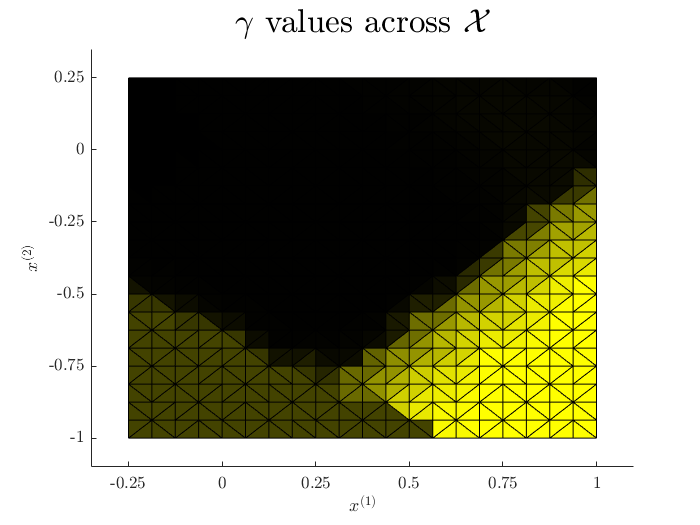}
}
}
\end{figure}

\begin{figure}[htbp]
\floatconts
{fig:bf_aut}
{\caption{The resulting CPA  barrier function found for the linear autonomous system \ref{fig:linearSys_bf} and nonlinear autonomous system \ref{fig:nonlinearSys_bf} are shown. The high data density for the nonlinear system results in very small simplices, while the linear system's \ac{bf} was generated with much fewer data points.}}
{
\subfigure[Linear autonomous system.]{%
\label{fig:linearSys_bf}
\includegraphics[width = 0.465\textwidth]{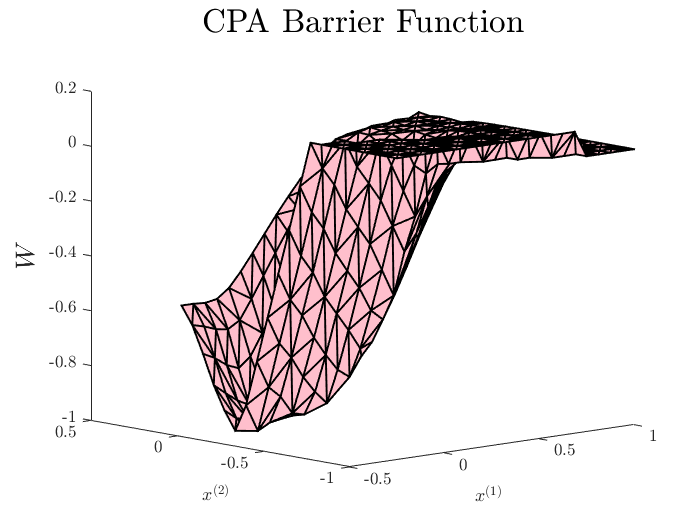}
}\qquad 
\subfigure[Nonlinear autonomous system.]{%
\label{fig:nonlinearSys_bf}
\includegraphics[width = 0.465\textwidth]{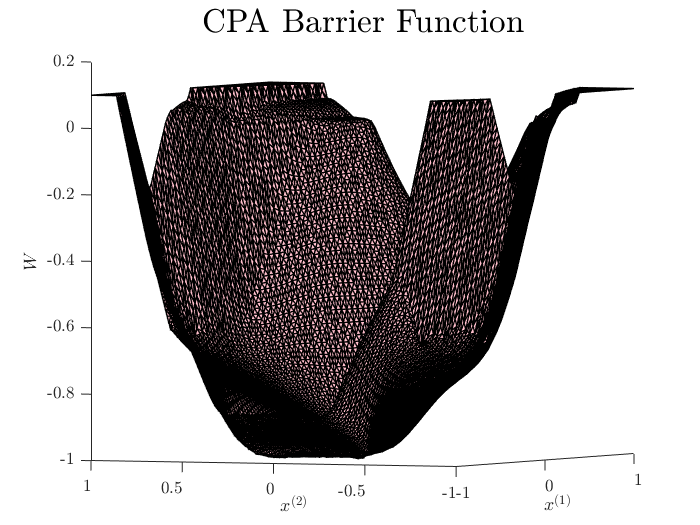}
}
}
\end{figure}

\subsection{Experimental details: Autonomous Systems}\label{sec:expDetails}

Algorithm \ref{alg:ico} was tested on two autonomous dynamical systems: a linear system and a nonlinear system. The dynamical systems and related experimental details are described below.
\begin{itemize}
    \item The autonomous linear system is described by the equation of motion
    \begin{equation}\label{eq:autLin}
        \x^+ = \bmat{0.22 & 0.4013\\ -0.5364 & 0.2109}\x.
    \end{equation}
    The Lipschitz constant of the system, $L$, is $0.5837$. The state constraint set considered was $\Xcal = [-0.25,1] \times [-1, 0.25]$. 
    
    In order to produce the safe set in Figure \ref{fig:linearSys} (and corresponding $\gamma$ values in Figure \ref{fig:linearSysAlpha}), a \ac{bf} was synthesized using grid sampled one-step trajectories in $\Xcal.$ A spacing of $0.0625$ was used between each point, producing a data set with $441$ one-step trajectories, i.e. $\{\x_z,\x_z^+\}_{i=1}^{441}.$ Algorithm \ref{alg:ico} found a feasible \ac{bf} at the 34th iteration. Algorithm \ref{alg:ico} then expanded that initial invariant set for an additional $9,966$ iterations with the final change in the cost between iteration being $1.7\times 10^{-5}$.
    \item The autonomous nonlinear system is described by
    \begin{equation}
        \x^+ = \bmat{0.5x^{(1)} -0.7(x^{(2)})^2 \\ 0.9(x^{(2)})^3 +x^{(1)}x^{(2)}}.
    \end{equation}
    The Lipschitz constant of the system used was $L  = 4.05$ and the state constraint set considered was $\Xcal = [-1, 1] \times [-1, 1]$.

    To produce results in Figures \ref{fig:nonlinSys} and \ref{fig:nonlinSys_gamma}, Algorithm \ref{alg:ico} was applied to a data set of one-step trajectories. This data set was created by sampling one-step trajectories over a grid of $0.02$ intervals covering $\Xcal$. This produced a total of $10,201$ one-step trajectories in the data set. Algorithm \ref{alg:ico} required 12 iterations to determine a feasible \ac{bf} and then iterated an additional $66$ iterations to expand the set with the final change in the cost between iterations being $0.0597.$

\end{itemize}

Algorithm \ref{alg:ico} requires convex optimization, which is solved using Mosek version 9.3.13 \citep{mosek} and YALMIP \citep{Lofberg2004} in MATLAB. Solving Problem \ref{prob:icoW} (a single iteration of Algorithm \ref{alg:ico}) typically took between 1 and 110 seconds (depending on the data set size) with a maximum number of computational threads being 8.

\subsection{Sampling during BF synthesis}\label{sec:sampleRes}

Section \ref{sec:additionalData} provides guidelines for sampling during \ac{bf} synthesis. Here, we provide brief results for the linear dynamical system \eqref{eq:autLin}.

The initial data set used to synthesize a \ac{bf} was a grid sampling of one step trajectories over $\Xcal = [-0.25,1] \times [-1, 0.25]$, where the sampling interval was $0.25$. Additional data points were added when finding a feasible \ac{bf} by determining the simplex with the highest summed slack variable and then adding an additional data point to bisect the simplex. During safe set expansion, additional data points were added along the current boundary of the safe set. Overall, $453$ one step trajectores were used.

Figure \ref{fig:addPoints} shows the safe set found in \ref{fig:linearSys_addPoints} and the corresponding values of $\gamma$ on each simplex of the triangulation in \ref{fig:linearSys_addPoints_alpha}. Note that the triangulation is no longer uniform across $\Xcal.$ Instead, refinements are concentrated in certain regions -- typically near the border of the safe set.

\begin{figure}[htbp]
\floatconts
{fig:addPoints}
{\caption{In \ref{fig:linearSys_addPoints},  the safe set found for the linear autonomous system when using adaptive sampling during Algorithm \ref{alg:ico} is shown. The safe set boundary is in blue, which approaches the maximal safe set of the system (find via model based geometric methods \citep{MPT3}), which is shown as the green polygon. In \ref{fig:linearSys_addPoints_alpha}, the corresponding $\gamma$ values across the triangulation are shown -- with black representing a smaller value of $\gamma$ near the minimum of 0.01 and yellow representing a higher value of $\gamma$ near the maximum of 14.45.}}
{
\subfigure[Linear autonomous system.]{%
\label{fig:linearSys_addPoints}
\includegraphics[width = 0.465\textwidth]{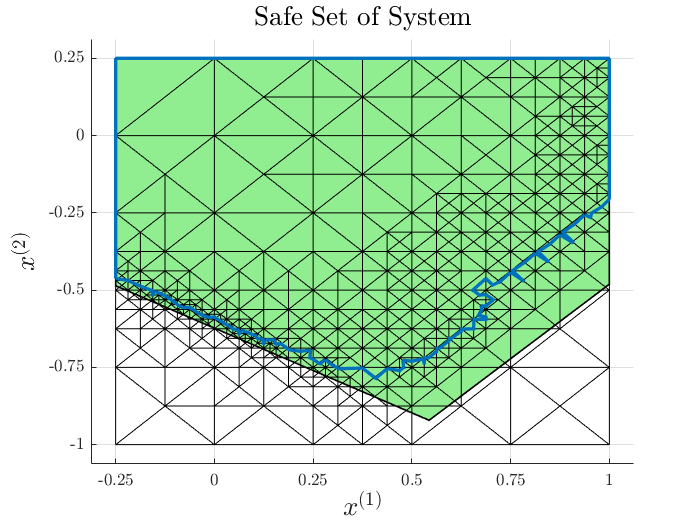}
}\qquad 
\subfigure[$\gamma$ values across the triangulation.]{%
\label{fig:linearSys_addPoints_alpha}
\includegraphics[width = 0.465\textwidth]{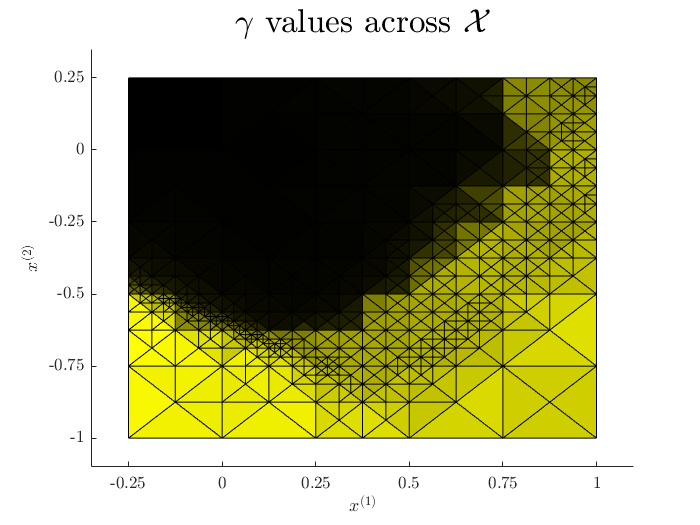}
}
}
\end{figure}

 \pagebreak
\section{Non-Autonomous Systems: Numerical Results}\label{sec:nonAut}


Algorithm \ref{alg:ico} was tested on a linear, non-autonomous dynamical system  described by the equation of motion
    \begin{equation}\label{eq:nonAutLin}
        \x^+ = \bmat{0.22 & 0.4013\\ -0.5364 & 0.2109}\x + \bmat{0\\1} \bu.
    \end{equation}
    The state constraint set considered was $\Xcal = [-0.25,1] \times [-1, 0.25]$ and $\Ucal = [-1,1].$ 
    The Lipschitz constant of the system with respect to $\x$ is $0.5837$, and the Lipschitz constant with respect to $\bu$ is $1$. Because the $\B$ matrix is constant, the Lipschitz continuity of the system can be bounded as $\norm{\A\x+\B\bu - \A\y - \B\w} \leq \norm{\A}\norm{\x-\y} +\norm{\B}\norm{\bu-\w}$ via application of the triangle inequality and the properties of matrix operator norms.
    
    The \ac{bf} was synthesized using grid sampled one-step trajectories in $\Xcal \times \Ucal.$ A spacing of $0.0625$ was used between each state. The inputs were sampled from $-1$ to $1$ with an interval of $0.1$. This results in the data set producing a data set with $441\times 21$ one-step trajectories, i.e. $\Dcal = \{\x_z,\{\bu_{z,k},\x_{z,k}^+\}_{k=1}^{21}\}_{z=1}^{441}.$ 

    Algorithm \ref{alg:ico} found a feasible \ac{bf} at the 10th iteration. The safe set was then expanded for an additional 13 iterations, with minimal change in the cost function for the last 4 iterations.

The \ac{cpa} \ac{bf} found for the non-autonomous linear system is shown in Figure \ref{fig:nonAut_bf}.

\begin{figure}[htbp]
\floatconts
{fig:nonAut_bf}
{\caption{The \ac{cpa} \ac{bf} found for the linear, non-autonomous system is shown, where the function is affine on each simplex of the triangualtion.}}
{
\subfigure[Linear non-autonomous system.]{%
\label{fig:linNonAutCBF}
\includegraphics[width = 0.465\textwidth]{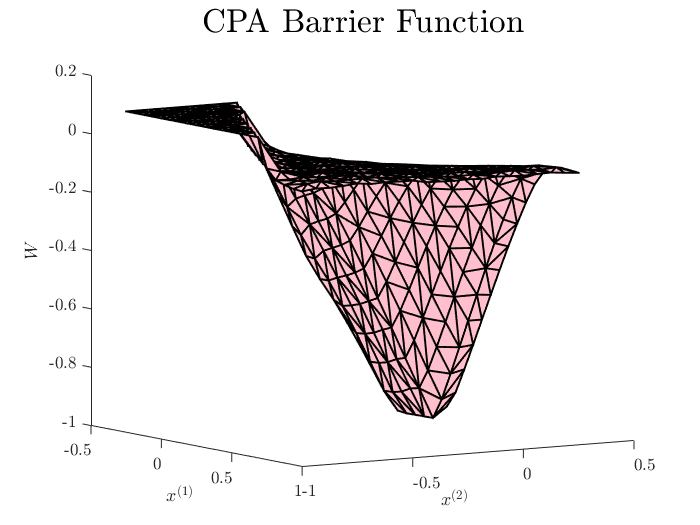}
}
}
\end{figure}

\end{document}